\newcommand{\ie}{\emph{i.e.}, }
\newcommand{\eg}{\emph{e.g.}, }
\newcommand{\nts}{\mathcal{N}}
\newcommand{\init}{\mathrm{init}}
\newcommand{\ap}{{AP}}
\newcommand{\obs}{O}
\newcommand{\sens}{\Theta}
\newcommand{\obsm}{M}
\newcommand{\obsf}{\gamma}
\newcommand{\cost}{g}
\newcommand{\strat}{C}
\DeclareMathOperator{\U}{\mathbf{U}}
\DeclareMathOperator{\X}{\mathbf{X}}
\DeclareMathOperator{\F}{\mathbf{F}}
\newcommand{\dfa}{\mathcal{A}}
\newcommand{\task}{\varphi}
\newcommand{\tac}{V}
\newcommand{\product}{\mathcal{P}}
\newcommand{\blf}{\mathcal{B}}
\newcommand{\B}{\mathbf{B}}
\newcommand{\A}{\mathbf{A}}
\newcommand{\bbf}{\mathbf{b}}
\newcommand{\abf}{\mathbf{a}}
\newcommand{\wtg}{\mathrm{wtg}}
\newtheorem{df}{Definition}
\newtheorem{proposition}{Proposition}
\newtheorem{corollary}{Corollary}
\newtheorem{theorem}{Theorem}
\newtheorem{problem}{Problem}
\newtheorem{remark}{Remark}
\newtheorem{example}{Example}
\newcommand{\eva}[1]{#1}
\title{\LARGE \bf
Optimal Observation \eva{Mode} Scheduling for Systems\\ under Temporal Constraints
}
\author{Eva Tesa\v{r}ov\'{a}, M\'{a}ria Svore\v{n}ov\'{a}, Ji\v{r}\'{i} Barnat, Ivana \v{C}ern\'{a}
\thanks{The authors are with Faculty of Informatics, Masaryk University, Brno, Czech Republic, {\tt\footnotesize xtesaro2@fi.muni.cz, svorenova@mail.muni.cz, barnat@fi.muni.cz, cerna@muni.cz}. This work has been partially supported by the Czech Science Foundation grant No. 15-08772S.}%
}
\begin{document}

\maketitle
\thispagestyle{empty}
\pagestyle{empty}

\begin{abstract}
Autonomous control systems use various sensors to decrease the amount of uncertainty under which they operate. While providing partial observation of the current state of the system, sensors require resources such as energy, time and communication. We consider discrete systems with non-deterministic transitions and multiple observation modes. The observation modes provide different information about the states of the system and are associated with non-negative costs. We consider two control problems. First, we aim to construct a control and observation mode switching strategy that guarantees satisfaction of a finite-time temporal property given as a formula of syntactically co-safe fragment of LTL (scLTL) and at the same time, minimizes the worst-case cost accumulated until the point of satisfaction. Second, the bounded version of the problem is considered, where the temporal property must be satisfied within given finite time bound. We present correct and optimal solutions to both problems and demonstrate their usability on a case study motivated by robotic applications.
\end{abstract}

\section{Introduction}

Embedded systems used in transportation, medical and other safety critical applications typically operate under uncertainty. The source of the uncertainty can be of two types. The internal uncertainty is bounded to the system's control inputs such as noisy actuators in mobile robots. The external uncertainty arises from the system's interaction with the environment such as other robots or people operating in the same space. In order to lower the amount of uncertainty, sensors are deployed to provide information about the current state of the system. Individual sensors and their combinations may provide varying, partial observation of the current state of the system. At the same time, their deployment requires resources such as energy, time or communication.

The field of sensor scheduling studies the problem of the  deployment of sensors in order to optimize estimation of a signal connected to the system's state. There is a wide range of results, \eg for linear systems~\cite{linearsensors}, hybrid systems~\cite{hybridsensors} and for applications in robot motion planning~\cite{mrg}. The related field of information gathering assumes a fixed set of sensors and aims to find a control strategy for the system that optimizes estimation of the signal. Recently, a problem combining optimization with temporal objectives has been considered in information gathering for discrete systems~\cite{austin}. 

Partial observability has been extensively studied for discrete systems in artificial intelligence and game theory. 
The main focus is typically on partially observable Markov decision processes (POMDPs) that model both partial observation and probabilistic uncertainties. The optimization objectives include expected total cost over finite horizon~\cite{martinaaai15}, and expected average or discounted total cost over infinite horizon~\cite{kaelbling98,pineau03}. Besides optimization objectives, many systems also operate under temporal constraints. Recently, temporal logics such as Linear Temporal Logic (LTL) or Computation Tree Logic (CTL) have been increasingly used to specify temporal properties of systems such as reachability, safety, stability of response. 
The overview of results for partially observable stochastic games (with POMDPs as a subclass) with respect to various classes of temporal objectives can be found in~\cite{pogamessurvey}. It is important to note that most of the problems of quantitative nature formulated for POMDPs are undecidable to solve precisely or even to approximate~\cite{pogamessurvey,madani03}. All the above results consider systems with one fixed observation mode that can be seen as a deployment of a single sensor.

Comparing to the aforementioned fields of study, in this work we focus on a problem that combines the optimal and temporal control for systems with multiple observation modes. We present a discrete system for modeling the above setting referred to as a \textit{non-deterministic transition system (NTS) with observation modes}. The non-determinism can be used to model both the internal and external uncertainty of the system whereas observation modes capture the sensing capabilities. In every step of an execution of the system, one decides which mode of partial observation to activate. Activation of each observation mode is associated with a non-negative cost. \eva{An example of a robotic system with limited energy resources and multiple sensing capabilities is a planetary rover. In~\cite{mrg}, the authors design an optimal schedule for the use of a localization system in a rover that minimizes energy consumption while at the same time guarantees safe path following. While sensor readings are typically continuous, in this work we assume that the set of readings that affect decision-making can be represented by a finite set, \eg sets of values satisfying the same constraints.}

We consider the following two problems. First, the aim is to construct a control and observation mode switching strategy for an NTS with observation modes that (i) guarantees satisfaction of a finite-time temporal property given as a formula of syntactically co-safe fragment of LTL (scLTL) and (ii) minimizes the worst-case cost accumulated until the point of satisfaction. The second problem considers the bounded version of the above problem, where the temporal property is required to be satisfied in at most $k\geq 1$ steps. Leveraging techniques from automata-based model checking and graph theory, we present correct and optimal solutions to both problems. While in this work, we restrict ourselves to objectives over finite time horizon, the aim is to use these results as a basis for solving more intriguing problems in our future work, \eg involving infinite-time temporal properties and cost functions, and probabilistic models. At the same time, temporal properties over finite horizon offer lower computational and strategy complexity compared to the general class of temporal properties and cover many interesting properties typically considered, \eg in robotic applications~\cite{scltl,austin,alphanscltl}. 
 
To the best of our knowledge, discrete systems with multiple modes were first considered only recently in \cite{krishbudget,krishminattention}, where the authors focus on control with respect to properties in infinite time horizon. The most related work to ours is~\cite{bertrand11} that considers a variation of POMDPs, where at each step the user can either choose to use the partial information or pay a fixed cost and receive the full information about the current state of the system. The authors discuss the problems of minimizing the worst-case or expected total cost before reaching a designated goal state with probability 1. While the former problem has optimal, polynomial solution, the latter proves undecidable. \eva{The main contribution of our work is twofold. First, we introduce a new model that extends the one in~\cite{bertrand11} in the sense that we allow multiple observation modes with varying costs. Second, we design correct and optimal strategies to control such models to guarantee an scLTL formula while minimizing the corresponding cost, over bounded or unbounded time horizon.  } 

The rest of this paper is organized as follows. In Sec.~\ref{sec:prelims}, we introduce NTS with observation modes and necessary definitions from temporal logic and automata theory. The two problems of interest are stated in Sec.~\ref{sec:pf} and solved in Sec.~\ref{sec:solution} and~\ref{sec:solutionbounded}, respectively. For better readability, we use an illustrative example to demonstrate the presented framework. In Sec.~\ref{sec:cs}, we evaluate the proposed algorithms on a case study motivated by robotic applications. We conclude with final remarks and future directions in Sec.~\ref{sec:conclusion}.


\section{Preliminaries}\label{sec:prelims}

\subsection{Notation}

For a set $X$, we use $X^*$ 
to denote the set of all finite 
sequences of elements of $X$. 
A finite sequence $\sigma=x_0\ldots x_n\in X^*$ has length $|\sigma|=n+1$, $\sigma(i)=x_i$ is the $i$-th element and $\sigma^i=\sigma(i)\ldots \sigma(n)$ is the suffix starting with the $i$-th element, for $0\leq i\leq n$. Similarly, for an infinite sequence $\rho=x_0x_1\ldots \in X^\omega$, $\rho(i)=x_i$ for all $i\geq 0$. A prefix of a finite sequence $\sigma$ or an infinite sequence $\rho$ is any sequence $\sigma(0)\ldots \sigma(k)$ for $0\leq k\leq |\sigma|$ or $\rho(0)\ldots \rho(k)$ for $k\geq 0$, respectively. 



\subsection{System with observation modes}

\begin{df}[NTS]\label{def:nts}
A non-deterministic transition system (NTS) is a tuple $\nts=(S,A,T,s_{\init},\ap,L)$, where $S$ is a non-empty finite set of states, $A$ is a non-empty finite set of actions, $T\colon S\times A\to 2^S$ 
is a transition function, $s_{\init} \in S$ is the initial state, $\ap$ is a set of atomic propositions, and $L\colon S \to 2^{\ap}$ is a labeling function.
\end{df}

A run of a NTS is an infinite sequence $s_0s_1\ldots \in S^\omega$ such that for every $i\geq 0$ there exists $a\in A$ with $s_{i+1}\in T(s_{i},a)$. A finite run is a finite prefix of a run of the NTS. 

\begin{df}[NTS with observation modes]\label{def:sens}
A NTS with observation modes is \eva{a tuple $(\nts, \obs, \obsm)$, where $\nts=(S,A,T,s_{\init},\ap,L)$ is NTS, $\obs$ is a non-empty finite set of observations  and $\obsm$ is a non-empty finite set  of observation modes}. Every observation mode $m\in \obsm$ is associated with an observation function $\obsf_m: S\to 2^\obs$ and a cost $\cost_m\in \mathbb{R}^+_{0}$.
\end{df}





A run of a NTS with observation modes is an infinite sequence $\rho = (s_0,m_0)(s_1,m_1)\ldots \in (S\times \obsm)^\omega$ such that $s_0s_1\ldots$ is a run of the NTS. A finite run $\sigma=(s_0,m_0)\ldots (s_n,m_n)\in (S\times \obsm)^*$ of the NTS with observation modes is a finite prefix of a run. A pair $(s,m)\in S\times \obsm$ of a state and an observation mode is called a configuration. 


Given a finite run $\sigma = (s_0,m_0)\ldots (s_n,m_n)$, we define the cost of $\sigma$ as follows
\begin{equation}\label{eq:toc}
\cost(\sigma)=\sum \limits_{i=0}^{n} \cost_{m_i}.
\end{equation}

The observational trace of a run $\rho = (s_0,m_0)(s_1,m_1)\ldots$ is the sequence $\obsf(\rho)=\obsf_{m_0}(s_0)\obsf_{m_1}(s_1)\ldots \in (2^\obs)^\omega$ and the propositional trace of $\rho$ is the sequence $L(\rho)=L(s_0)L(s_1)\ldots \in (2^\ap)^\omega$. The observational and propositional traces of finite runs are defined analogously.

\begin{df}[Strategy]\label{def:strat}
Given a NTS with observation modes, a (observation-based control and observation scheduling) strategy is a function $\strat:(2^\obs)^*\to A\times \obsm$ that defines the action and the observation mode to be applied in the next step based only on the sequence of past observations.
\end{df}

We use $\sigma_\strat$ and $\rho_\strat$ to denote finite and infinite runs of the NTS $\nts$ induced by a strategy $\strat$, respectively. Note that for every configuration $(s,m)$, the strategy $\strat$ induces a non-empty set of runs $\rho_\strat$ with $\rho_\strat(0)=(s,m)$.

\begin{example}\label{ex:system}
Consider an NTS $\nts=(S,A,T,s_{\init},\ap,L)$, where $S=\{s_1,\ldots ,s_7\}$, $A=\{a,b\}$, $s_\init=s_1$ and the transition function is as depicted in Fig.~\ref{fig:system}. We let $\ap=\{\star\}$ and the labeling function is indicated in Fig.~\ref{fig:system}, \ie $L(s_6)=\{\star\}$ and $L(s_i)=\emptyset$ for every $i\neq 6$. Consider three observation modes $\obsm=\{m_1,m_2,m_3\}$ for $\nts$ such that their respective observation functions $\obsf_1,\obsf_2,\obsf_3$ report neither the shape nor the color, only the shape and both the shape and the color of the state as shown in Fig.~\ref{fig:system}. Hence, the set of observations is 
\begin{align*}
\obs = \{ & \texttt{white}, \texttt{blue}, \texttt{red}, \\
& \texttt{circle}, \eva{\texttt{rectangle}}, \texttt{diamond}\}
\end{align*}
For example, for state $s_2$ the observation functions are defined as $\obsf_1(s_2)=\emptyset,\obsf_2(s_2)=\{\eva{\texttt{rectangle}}\}, \obsf_3(s_2)=\{\eva{\texttt{rectangle}},\texttt{blue}\}$. The costs of the observation modes are $\cost_1=0, \cost_2=1,\cost_3=2$.

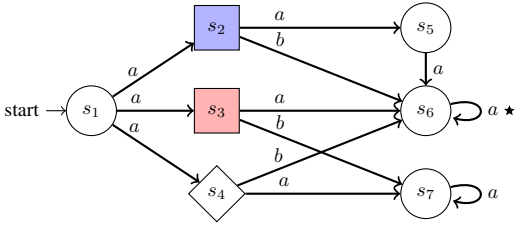
\begin{figure}[t]
\begin{center}
\tikzset{square blue state/.style={draw,regular polygon,regular polygon sides=4,fill=blue!30}}
\tikzset{square red state/.style={draw,regular polygon,regular polygon sides=4,fill=red!30}}
\tikzset{diamond state/.style={draw,diamond}}
\tikzset{star state/.style={draw,star,star points=5,star point ratio=2.25,fill=black,scale=.2}}
\begin{tikzpicture}[scale=0.55, every node/.style={scale=.75}]
\node [state,initial] (s1) at (-4,0) {$s_1$};
\node [square blue state] (s2) at (-1,2) {$s_2$};
\node [square red state] (s3) at (-1,0) {$s_3$};
\node [diamond state] (s4) at (-1,-2) {$s_4$};
\node [state] (s5) at (4,2) {$s_5$};
\node [state] (s6) at (4,0) {$s_6$};
\node [state] (s7) at (4,-2) {$s_7$};
\node [star state] (star) at (6,0) {};
\path[->, thick]
(s1) edge [above, near start]  node [align=center]  {$a$} (s2)
(s1) edge [above, near start]  node [align=center]  {$a$} (s3)
(s1) edge [above, near start]  node [align=center]  {$a$} (s4)
(s2) edge [above, near start]  node [align=center]  {$a$} (s5)
(s2) edge [above, near start]  node [align=center]  {$b$} (s6)
(s3) edge [above, near start]  node [align=center]  {$a$} (s6)
(s3) edge [above, near start]  node [align=center]  {$b$} (s7)
(s4) edge [above, near start]  node [align=center]  {$a$} (s7)
(s4) edge [above, near start]  node [align=center]  {$b$} (s6)
(s5) edge [right]  node [align=center]  {$a$} (s6)
(s6) edge [loop right]  node [align=center]  {$a$} (s6)
(s7) edge [loop right]  node [align=center]  {$a$} (s7)
;
\end{tikzpicture}
\end{center}
\caption{Example of an NTS with observation modes. For full description see Ex.~\ref{ex:system}.}
\label{fig:system}
\end{figure}
\end{example}

\subsection{Specification}\label{subsec:spec}

Linear temporal logic (LTL) is a modal logic with modalities referring to time~\cite{ltl}. Formulas of LTL are interpreted over infinite words such as the propositional traces generated by runs of a NTS with observation modes. Co-safe fragment of LTL, or co-safe LTL, contains all LTL formulas such that every satisfying infinite word has a good finite prefix~\cite{scltl}. A good finite prefix is a finite word such that every its extension to an infinite word satisfies the formula. A class of co-safe LTL formulas that are easy to characterize are syntactically co-safe LTL formulas~\cite{scltl}. 

\begin{df}[scLTL]\label{def:scltl}
Syntactically co-safe LTL (scLTL) formulas over $\ap$ are the LTL formulas formed as follows:
\begin{equation*}
\varphi :: p\mid \neg p\mid \varphi \wedge \varphi\mid \varphi \vee\varphi\mid \X\varphi \mid \varphi\U\varphi\mid \F\varphi,
\end{equation*}
where $p\in \ap$, $\wedge$ (conjunction) and $\vee$ (disjunction) are Boolean operators, and $\X$ (\emph{next}), $\U$ (\emph{until}) and $\F$ (\emph{future} or \emph{eventually}) are temporal operators.
\end{df}

The satisfaction relation $\models$ is recursively defined as follows. For a word $w\in (2^{\ap})^\omega$, we let:

\smallskip

\begin{tabular}{l c l}
$w\models p$ & $\Leftrightarrow$ & $p\in w(0)$,\\
$w\models \neg p$ & $\Leftrightarrow$ & $p\not \in w(0)$,\\
$w\models \varphi_1 \wedge \varphi_2$ & $\Leftrightarrow$ & $w\models\varphi_1$ and $w\models\varphi_2$,\\
$w\models \varphi_1 \vee \varphi_2$ & $\Leftrightarrow$ & $w\models\varphi_1$ or $w\models\varphi_2$,\\
$w\models \X \varphi$ & $\Leftrightarrow$ & $w^1\models \varphi$,\\
$w\models \varphi_1 \U\varphi_2$ & $\Leftrightarrow$ & there exists $i\geq 0 :\, w^i\models \varphi_2$,\\
& & and for all $0\leq j<i:\, w^j\models \varphi_1$\\
$w\models \F \varphi$ & $\Leftrightarrow$ & there exists $i\geq 0 :\, w^i\models \varphi$.
\end{tabular}

\smallskip
\eva{\begin{remark}
To express properties over bounded time horizon, bounded temporal opeators $\U_{\leq k}, \F_{\leq k}$ are often used in the literature. Note that these can be encoded using the operators from Def.~\ref{def:scltl}.
\end{remark}}

Even though scLTL formulas have infinite-time semantics, their satisfaction is guaranteed in finite time through the concept of good finite prefixes as explained above. We represent scLTL formulas with finite automata. 

\begin{df}[DFA]\label{def:dfa}
A deterministic finite automaton (DFA) is a tuple $\dfa=(Q,2^\ap,\delta,q_0,F)$, where $Q$ is a non-empty finite set of states, $2^\ap$ is the alphabet, $\delta:Q\times 2^\ap\to Q$ is a transition function, $q_0\in Q$ is the initial state and $F\subseteq Q$ is a non-empty set of accepting states.
\end{df}

A run of a DFA is a finite sequence $q_0q_1\ldots q_n\in Q^*$ such that for every $i\geq 0$, there exists $X\in 2^\ap$ such that $q_{i+1}=\delta(q_i,X)$. Every finite word $w\in (2^\ap)^*$ induces a run of the DFA. A run is called accepting if its last state is an accepting state. A word $w$ is accepted by the DFA if it induces an accepting run. 

Given an scLTL formula $\varphi$, one can construct a minimal (in the number of states) DFA that accepts all and only good finite prefixes of $\varphi$ using a translation algorithm from~\cite{scltl} and automata theory techniques~\cite{sipser}. 

A run $\rho$ of a NTS with observation modes satisfies an scLTL formula $\task$ if $L(\rho)\models \task$ or, equivalently, if there exists a finite prefix $\rho^\task$ of $\rho$ such that $L(\rho^\task)$ is a good finite prefiex for the formula $\task$. We refer to prefixes $\rho^\task$ as the good finite prefixes of the run $\rho$ for the formula $\task$. We say that a strategy $\strat$ satisfies $\task$ starting from a configuration $(s,m)$ if $\rho_\strat \models \task$ for every run $\rho_\strat$ such that $\rho_\strat(0)=(s,m)$.

\begin{example}\label{ex:task}
Consider the set of atomic propositions $\ap=\{\star\}$. An example of an scLTL formula over $\ap$ is $\task = \F \star$. A corresponding minimal DFA $\dfa$ for $\task$ is shown in Fig.~\ref{fig:task}.

\begin{figure}[t]
\label{fig:task}
\begin{center}
\tikzset{star/.style={draw,star,star points=5,star point ratio=2.25,fill=black,scale=.4}}
\begin{tikzpicture}[scale=0.8, every node/.style={scale=.7}]
\node[state, initial] (q0) at (0,0) {$q_0$};
\node[state, accepting](q1) at (3,0) {$q_1$};
;
\path[->, thick] (q0) edge [loop above]  node [align=center]  {$\emptyset$} (q0)
(q0) edge [above]  node [align=center]  {$\{\Large{\star}\}$} (q1)
(q1) edge [loop above]  node [align=center]  {$\emptyset,\{\star\}$} (q1);
\end{tikzpicture}
\end{center}
\vspace*{0.3cm}
\caption{A minimal DFA for the scLTL formula from Ex.~\ref{ex:task}.}
\end{figure}
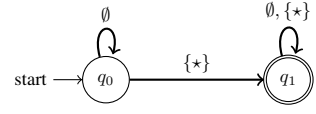
\end{example}


\section{Problem formulation}\label{sec:pf}

Consider a NTS with observation modes $(\nts, \obs, \obsm)$. The main motivation for the problem formulated in this section is a robotic system, \eg an autonomous car driving in an urban-like environment, that involves uncertainty originating, \eg from the motion of the robot such as noisy actuators of the car or from interaction with dynamic elements in the environment such as pedestrians on streets. Typically, the system is equipped with a set of sensors, where each sensor provides a partial information about the uncertainties. In such a case, an NTS can be used to model the motion capabilities of the robot in a partitioned environment and its interaction with the dynamic elements. The observation modes of the NTS then represent possible subsets of sensors and the cost of an observation mode corresponds to the amount of resources such as energy or communication needed to deploy the chosen set of sensors for a single step. During executions of the system, only the observations associated with the current state of the NTS and the chosen observation mode are available. Hence, the current state of the system might not be uniquely recognized.

We assume that the system is given a temporal objective in the form of an scLTL formula $\task$ over the set of atomic propositions $\ap$. Given a starting configuration $(s,m)$ and a strategy $\strat$ that satisfies the formula $\task$, we define the following cost function:
\begin{equation}\label{eq:tac}
\tac(\strat,(s,m),\task)=\max\limits_{\rho_\strat,\rho_\strat(0)=(s,m)}\quad \min \limits_{\rho^\task\rho'=\rho_\strat} \cost(\rho^\task).
\end{equation}
Intuitively, the cost $\tac(\strat,(s,m),\task)$ of a strategy $\strat$ with respect to the formula $\task$ and the configuration $(s,m)$ is the worst-case cumulative cost of the (earliest) satisfaction of $\task$ using $\strat$ starting from configuration $(s,m)$.

In this work, we aim to synthesize strategies that optimize the above cost, while at the same time guarantee satisfaction of the scLTL formula. We consider both general and bounded version of this problem, formulated as follows.

\smallskip

\begin{problem}[Optimal scLTL control]\label{pf:task}
Given 
a NTS \eva{with observation modes $(\nts, \obs, \obsm)$, where $\nts=(S,A,T,s_{\init},\ap,L)$,
an initial observation mode $m_\init \in \obsm$,} 
and an scLTL formula $\task$ over $\ap$, 
find an observation-based control and observation scheduling strategy $\strat$ such that 
(1) $\strat$ satisfies $\task$ starting from the configuration $(s_{\init},m_\init)$, 
and (2) the cost $\tac(\strat,(s_{\init},m_\init),\task)$ is minimized over all strategies satisfying $\task$ starting from the configuration $(s_{\init},m_\init)$.
\end{problem}

\smallskip

\begin{problem}[Bounded optimal scLTL control]\label{pf:taskbounded}
Given 
a NTS \eva{with observation modes $(\nts, \obs, \obsm)$, where $\nts=(S,A,T,s_{\init},\ap,L)$,
an initial observation mode $m_\init \in \obsm$,} 
an scLTL formula $\task$ over $\ap$, 
and a finite bound $k\geq 0$,  
find an observation-based control and observation scheduling strategy $\strat$ such that 
(1) $\strat$ satisfies $\task$ starting from the configuration $(s_{\init},m_\init)$ in at most $k$ steps, 
and (2) the cost $\tac(\strat,(s_{\init},m_\init),\task)$ is minimized over all strategies satisfying $\task$ starting from the configuration $(s_{\init},m_\init)$ in at most $k$ steps.
\end{problem}

\smallskip

In Sec.~\ref{sec:solution}, we propose an algorithm to solve the general Problem~\ref{pf:task} and prove its correctness and optimality. The algorithm builds on techniques from automata-based model checking and graph theory. The bounded Problem~\ref{pf:taskbounded} can be solved using an alternation of the above algorithm as proposed in Sec.~\ref{sec:solutionbounded}. Both algorithms are demonstrated in Sec.~\ref{sec:cs} on an illustrative case study of a mobile robot in an indoor environment equipped with a set of sensors. 

\begin{example}\label{ex:tac}
Consider the NTS with observation modes introduced in Ex.~\ref{ex:system} with initial observation mode $m_1$ and the scLTL formula from Ex.~\ref{ex:task} that requires to reach the state labeled with $\star\,$, \ie state $s_6$. Note that only in states $s_2,s_3,s_4$ there is more than one action allowed and hence it suffices to discuss strategies based on their decision in these states. No strategy $\strat$ with $\strat(\emptyset)=(a,m_1)$, \ie a strategy that applies action $a$ starting from the initial state $s_1$ and activates observation mode $m_1$ in the next state, can guarantee satisfaction of the formula. The reason is that the three states $s_2,s_3,s_4$ cannot be told apart using mode $m_1$ and both actions $a,b$ always in at least one case lead to state $s_7$ from which $s_6$ cannot be reached. Consider strategy $\strat_1$ that recognizes the shape of the three states $s_2,s_3,s_4$, \ie
\begin{align*}
\strat_1(\emptyset) &= (a,m_2),\\
\strat_1(\emptyset\{\eva{\texttt{rectangle}}\}) &= (a, m_1),\\
\strat_1(\emptyset\{\texttt{diamond}\}) &= (b, m_1).
\end{align*}
Strategy $\strat_1$ guarantees a visit to $s_6$ in at most 3 steps and its cost $\tac(\strat_1,(s_\init,m_1),\task)=1$. Alternatively, consider strategy $\strat_2$ that recognizes both the shape and the color of the three states $s_2,s_3,s_4$, \ie
\begin{align*}
\strat_1(\emptyset) &= (a,m_3),\\
\strat_1(\emptyset\{\eva{\texttt{rectangle}},\texttt{blue}\}) &= (b, m_1),\\
\strat_1(\emptyset\{\eva{\texttt{rectangle}},\texttt{red}\}) &= (a, m_1),\\
\strat_1(\emptyset\{\texttt{diamond,\texttt{white}}\}) &= (b, m_1).
\end{align*}
Strategy $\strat_2$ guarantees a visit to $s_6$ in 2 steps and its cost $\tac(\strat_2,(s_\init,m_1),\task)=2$. Strategy $\strat_1$ is the solution to the optimal scLTL control Problem~\ref{pf:task} as its cost is lower than the cost of $\strat_2$. However, if we consider the bounded optimal scLTL control Problem~\ref{pf:taskbounded} with $k=2$, then $\strat_2$ is the solution as $\strat_1$ may need more than 2 steps to reach $s_6$.  
\end{example}



\section{Optimal scLTL control}\label{sec:solution}



In this section, we describe the algorithm to solve Problem~\ref{pf:task} in detail. To approach the problem, we leverage automata-based model checking techniques that analyze the state space using graph algorithms. We first construct a synchronous product of the NTS $\nts$ and a DFA $\dfa$ for the scLTL formula $\task$, where the runs of the NTS satisfying the formula can be easily identified through accepting states of the DFA. Next, to account for the non-determinism and partial observation, we use a belief construction over the product that determines the set of possible current states of the product given any finite sequence of past observations. Using graph algorithms, we construct a strategy for the belief product that guarantees a visit of an accepting state and minimizes a function derived from the costs of the associated observation modes. Finally, we map the strategy from the belief product to the original NTS and prove that the resulting strategy is a solution to Problem~\ref{pf:task}. 

\subsection{Constructing the product}\label{subsec:product}

\begin{df}[Product]\label{def:product}
Let $\nts=(S,A,T,s_{\init},\ap,L)$ be a NTS and $\dfa = (Q, 2^{\ap}, \delta, q_0, F)$ be a DFA. The synchronous product is a tuple
\vspace{-0.2cm}
\begin{equation*}
\product = \nts \times \dfa = (S \times Q, A, T_{\product}, (s_{\init},q_0),\ap,L_\product, F_{\product}),
\end{equation*}
\vspace{-0.6cm}\\
where 
\begin{itemize}[noitemsep,topsep=0pt,parsep=0pt,partopsep=0pt]
\item $S\times Q$ is the set of states, 
\item $A$ is the alphabet, 
\item $T_{\product}\colon S \times Q\times A\to 2^{S\times Q}$ is a transition function such that $(s',q')\in T_{\product}((s,q),a)$ if and only if $s'\in T(s,a)$ and $\delta(q,L(s)) = q'$, 
\item $(s_{\init},q_0)$ is the initial state, 
\item $L_\product\colon S\times Q\to 2^{\ap}$ is the labeling function such that $L_\product((s,q))=L(s)$, 
\item $F_{\product} = \{(s,q) \mid q \in F\}$ is the set of product accepting states.
\end{itemize}
\end{df}

Note that the product can be seen as an NTS with a set of accepting states. This allows us to adopt the definitions of an infinite and finite runs for the product as well as a notion of an accepting finite run.

\begin{example}\label{ex:product}

In Fig..~\ref{fig:product}, we depict the product constructed for the NTS with observation modes presented in Ex.~\ref{ex:system} and the DFA from Ex.~\ref{ex:task}.

\begin{figure}[t]
\begin{center}
\tikzset{square blue state/.style={draw,rectangle,fill=blue!30,inner sep=10}}
\tikzset{square red state/.style={draw,rectangle,fill=red!30,inner sep=10}}
\tikzset{diamond state/.style={draw,diamond}}
\tikzset{star state/.style={draw,star,star points=5,star point ratio=2.25,fill=black,scale=.2}}
\begin{tikzpicture}[scale=0.7, every node/.style={scale=.55}]
\node [state,initial] (s1) at (-4,0) {\large $(s_1,q_0)$};
\node [square blue state] (s2) at (-2,2) {\large $(s_2,q_0)$};
\node [square red state] (s3) at (-2,0) {\large $(s_3,q_0)$};
\node [diamond state] (s4) at (-2,-2) {\large $(s_4,q_0)$};
\node [state] (s5) at (2,2) {\large $(s_5,q_0)$};
\node [state] (s6) at (2,0) {\large $(s_6,q_0)$};
\node [state] (s7) at (2,-2) {\large $(s_7,q_0)$};
\node [state, accepting] (s8) at (4,0) {\large $(s_7,q_1)$};
\node [star state] (star) at (6,0) {};
\path[->, thick]
(s1) edge [above, near start]  node [align=center]  {$a$} (s2)
(s1) edge [above, near start]  node [align=center]  {$a$} (s3)
(s1) edge [above, near start]  node [align=center]  {$a$} (s4)
(s2) edge [above, near start]  node [align=center]  {$a$} (s5)
(s2) edge [above, near start]  node [align=center]  {$b$} (s6)
(s3) edge [above, near start]  node [align=center]  {$a$} (s6)
(s3) edge [above, near start]  node [align=center]  {$b$} (s7)
(s4) edge [above, near start]  node [align=center]  {$a$} (s7)
(s4) edge [above, near start]  node [align=center]  {$b$} (s6)
(s5) edge [right]  node [align=center]  {$a$} (s6)
(s6) edge [above]  node [align=center]  {$a$} (s8)
(s7) edge [loop right]  node [align=center]  {$a$} (s7)
(s8) edge [loop right]  node [align=center]  {$a$} (s8)
;
\end{tikzpicture}
\end{center}
\caption{\eva{Product of the NTS from Ex.~\ref{ex:system} and the DFA from Ex.~\ref{ex:task}.}}
\label{fig:product}
\end{figure}
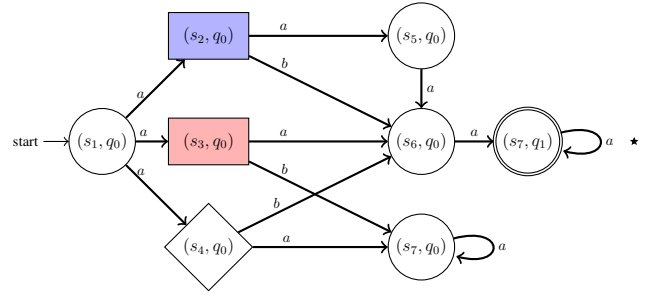

\end{example}

\smallskip

We abuse the notation by using $\obsf_\alpha$ to denote the observation function of a sensor $\alpha\in \sens$ as well as its extension to $S\times Q$, \ie $\obsf_\alpha((s,q))=\obsf_\alpha(s)$ for all $(s,q)\in S\times Q$.

\subsection{Constructing the belief product}\label{subsec:belief}

The belief construction over the product follows the standard principles used for partially observable systems. Besides keeping track of the states that the product can currently be in, we also keep track of the observation mode deployed in the current state.

\begin{df}[Weighted belief product]\label{def:belief}
Given a product $\product$ 
 and a set of observation modes $\obsm$ with the initial observation mode $m_\init$, we define the weighted belief product
\vspace{-0.3cm}
\begin{equation*}
\blf = (\B,\A, T_{\blf},\bbf_\init, \obs, F_{\blf}, w)
\end{equation*}
\vspace{-0.6cm}\\
over $\product$, where 
\begin{itemize}[noitemsep,topsep=0pt,parsep=0pt,partopsep=0pt]
\item $\B\subseteq 2^{S\times Q}$ is the set of all belief states, where a belief state $\bbf\in \B$ is a set of product states such that there exists an observation mode $m\in \obsm$ such that all states in $\bbf$ have the same observations in mode $m$, 
\item $\A=A\times \obsm$ is the set of belief actions of the form $\abf=(a,m)$, where $a\in A$ is an action of $\product$ and $m\in \obsm$ is an observation mode, 
\item $T_{\blf} : \B \times \A \to 2^\B$ is the transition function such that a belief state $\bbf'\in T_{\blf}(\bbf,(a,m'))$ if and only if $\bbf'$ is the set of all product states that can be reached in one step from a state in $\bbf$ using action $a$ and have the same observations in mode $m'$, 
\item $\bbf_\init=\{(s_\init,q_0)\}$ is the initial state,
\item $F_{\blf} = \{\bbf\mid \bbf\subseteq F_\product\}$ is the set of accepting belief states,
\item $w: \B \times \A \to \mathbb{R}^+_0$ is the weight function such that $w(\bbf,(a,m)) = \cost_{m}$.
\end{itemize}
\end{df}


Weighted belief product can be seen as a NTS with a set of accepting states and a weight function on transitions. We adopt definitions of finite and infinite runs of the belief product and an accepting finite run. 

\begin{corollary}\label{cor:belief}
From the definition of the weighted belief product $\blf$ it follows that every finite run of $\blf$ corresponds to exactly one finite sequence of observations in $(2^\obs)^*$ and at the same time, every finite sequence of observations in $(2^\obs)^*$ corresponds to at most one finite run of $\blf$.
\end{corollary}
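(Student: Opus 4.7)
My plan is to prove both clauses by induction on the length of the finite run (respectively, the observation sequence), using the defining property of belief states in Def.~\ref{def:belief}: every belief state produced as a successor via belief action $(a,m')$ consists exactly of the product states reachable via $a$ that share a common observation under $m'$.

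For the forward direction, given a finite run $\bbf_0\bbf_1\ldots\bbf_n$ of $\blf$ together with the belief actions $(a_0,m_0),\ldots,(a_{n-1},m_{n-1})$ witnessing its transitions, I would define the associated observation sequence by $o_0 := \obsf_{m_\init}(s_\init)$, which is unambiguous since $\bbf_\init=\{(s_\init,q_0)\}$ is a singleton, and for $i\geq 1$ by taking $o_i$ to be the observation shared by all states of $\bbf_i$ under mode $m_{i-1}$. Existence and uniqueness of this common observation follow directly from the construction of $T_\blf$, so each run canonically determines exactly one observation sequence of length $n+1$.

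For the backward direction, suppose two finite runs $\bbf_0\ldots\bbf_n$ and $\bbf_0'\ldots\bbf_n'$ yield the same observation sequence $o_0\ldots o_n$. I would prove $\bbf_i=\bbf_i'$ by induction on $i$. The base case is immediate from $\bbf_0=\bbf_0'=\{(s_\init,q_0)\}$. For the inductive step, assuming $\bbf_{i-1}=\bbf_{i-1}'$, the next belief is, by Def.~\ref{def:belief}, the set of all product states reachable from $\bbf_{i-1}$ via the chosen action that share the observation $o_i$ under the chosen mode; this makes $\bbf_i$ a deterministic function of the previous belief, the belief action, and the observation.

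The main obstacle is the inductive step of the backward direction: one must rule out the possibility that two distinct belief-action sequences produce distinct belief sequences while agreeing on all observations. The cleanest remedy, consistent with standard POMDP belief constructions, is to regard a run of $\blf$ as implicitly carrying the belief actions that justify its transitions, so that the correspondence ``observation sequence $\leftrightarrow$ run'' reduces to the observation-indexed determinism of the belief update. Once this is in place, the ``at most one'' clause is exactly the statement of that determinism lifted inductively to sequences of length $n+1$.
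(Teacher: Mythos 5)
The paper offers no proof of this corollary at all --- it is simply asserted to follow from Def.~\ref{def:belief} --- so there is no argument of record to compare yours against; your write-up is already more explicit than anything in the paper. Your forward direction is sound once you adopt the convention that a finite run carries the belief actions witnessing its transitions: each $\bbf_i$ with $i\geq 1$ is then, by construction of $T_\blf$, a non-empty set of product states sharing a single observation set under the mode $m_{i-1}$ of the incoming belief action, and $o_0=\obsf_{m_\init}(s_\init)$ is determined because $\bbf_\init$ is a singleton. (Without that convention even the ``exactly one'' clause can fail, since the same pair $\bbf_{i-1},\bbf_i$ may be witnessed by two belief actions whose modes assign different observation sets to the states of $\bbf_i$.)

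The genuine gap is in the ``at most one'' direction, and the remedy you propose does not close it. Recording the belief actions inside the run makes the belief update a deterministic function of the triple (previous belief, belief action, observation), as you say; but the observation sequence does not determine the belief actions. Two runs that record different actions or modes at step $i-1$ can produce different beliefs $\bbf_i$ that nevertheless carry the same observation set $o_i$ --- for instance a mode reporting $\emptyset$ on every state versus a mode reporting $\emptyset$ on only some of the successors --- so a single observation sequence may still correspond to two distinct runs, and the unrestricted claim does not follow from the definition alone. What makes the statement true in the way the paper actually uses it (Eq.~\ref{eq:strategynts}) is that the runs in question are induced by the fixed memoryless strategy $\strat_\blf$ returned by Alg.~\ref{alg:1}: there the belief action at each step is a function of the current belief state, so by exactly your induction the whole run becomes a deterministic function of the observation sequence. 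You should either state the corollary relative to runs induced by a fixed strategy, or add a hypothesis ruling out distinct successor groups sharing an observation set across modes; with the former restriction your induction closes the argument.
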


\begin{example}\label{ex:belief}

In Fig..~\ref{fig:belief}, we depict part of the weighted belief product constructed for the NTS with observation modes presented in Ex.~\ref{ex:system} and the DFA from Ex.~\ref{ex:task}. 

\begin{figure}[t]
\begin{center}
\tikzset{state/.style={draw,ellipse, inner sep=10}}
\tikzset{square state/.style={draw,rectangle,inner sep=10}}
\tikzset{square blue state/.style={draw,rectangle,fill=blue!30,inner sep=10}}
\tikzset{square red state/.style={draw,rectangle,fill=red!30,inner sep=10}}
\tikzset{diamond state/.style={draw,diamond}}
\tikzset{star state/.style={draw,star,star points=5,star point ratio=2.25,fill=black,scale=.2}}
\begin{tikzpicture}[scale=0.7, every node/.style={scale=.55}]
\node [state,initial] (s1) at (-4,3) {\large $\{(s_1,q_0)\}$};
\node [state] (s2) at (3,3) {\large $\{(s_2,q_0),(s_3,q_0),(s_4,q_0)\}$};
\node [square state] (s3) at (3,1) {\large $\{(s_2,q_0),(s_3,q_0)\}$};
\node [diamond state] (s4) at (3,-1) {\large $\{(s_4,q_0)\}$};
\node [square blue state] (s5) at (-0.5,-1) {\large $\{(s_2,q_0)\}$};
\node [square red state] (s6) at (-4,-1) {\large $\{(s_3,q_0)\}$};
\path[->, thick]
(s1) edge [above]  node [align=center]  {$(a,m_0), \mathbf{0}$} (s2)
(s1) edge [sloped, above]  node [align=center]  {$(a,m_1), \mathbf{1}$} (s3)
(s1) edge [sloped, above]  node [align=center]  {$(a,m_1),\mathbf{1}$} (s4)
(s1) edge [sloped, below]  node [align=center]  {$(a,m_2),\mathbf{2}$} (s4)
(s1) edge [sloped, above]  node [align=center]  {$(a,m_2), \mathbf{2}$} (s5)
(s1) edge [sloped, above]  node [align=center]  {$(a,m_2), \mathbf{2}$} (s6)
;
\end{tikzpicture}
\end{center}
\caption{Part of the weighted belief product for the NTS with observation modes presented in Ex.~\ref{ex:system} and the DFA from Ex.~\ref{ex:task}. The costs of individual belief actions are written in bold.}
\label{fig:belief}
\end{figure}
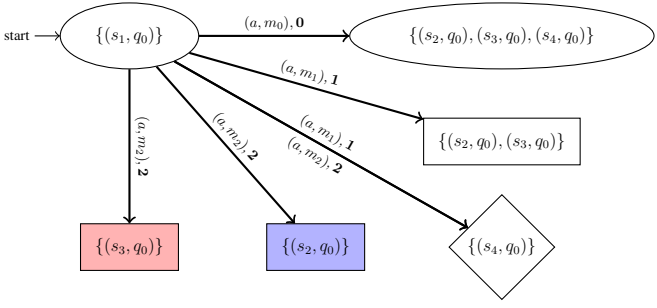

\end{example}

\smallskip

\begin{df}[Strategy]
Given a weighted belief product $\blf = (\B,\A, T_{\blf},\bbf_\init,\obs, F_{\blf}, w)$, a strategy for $\blf$ is a function $\strat\colon \eva{\B^* \to \A}$.\end{df}

\eva{We call a strategy $\strat$ memoryless if it can be defined as a function $\strat\colon \B \to \A$. }
If the context is clear, we use $\sigma_\strat$ and $\rho_\strat$ to denote finite and infinite runs of $\blf$ induced by a strategy $\strat$, respectively. 

\subsection{Constructing a strategy for the belief product}\label{subsec:beliefstrat}

In this section, we propose an algorithm that constructs a \eva{memoryless} strategy for the weighted belief product that guarantees a visit to an accepting state (if such a strategy exists) and minimizes the worst-case cumulative weight. We prove that such a strategy then maps to a strategy for the original NTS with observation modes that solves Problem~\ref{pf:task}. The algorithm can be seen as a combination of the standard algorithm for computing winning states in non-deterministic systems~\cite{baierbook} and Dijkstra's algorithm for computing shortest paths in a weighted graph~\cite{algorithms}. 

In the algorithm, we incrementally compute a value $\wtg(\bbf)$ (weight-to-go) for every belief state $\bbf$ that is the minimum worst case weight of reaching an accepting state starting from $\bbf$. Initially, the value is $0$ for accepting belief states and $\infty$ otherwise. We use $W_i$ to denote the set of belief states for which the value $\wtg(\bbf)\neq \infty$ after $i$-th iteration. In $i$-th iteration, we consider the belief state $\bbf_{\min}\in \B\setminus W_{i-1}$ and its action $\abf_{\bbf_{\min}}$ that leads to the set $W_{i-1}$ and minimizes the worst-case sum of the weight of the action and the value $\wtg$ of a successor state. The algorithm terminates when the initial belief state $\bbf_\init$ is added to the set $W_i$ or when there exists no state $\bbf\in \B\setminus W_i$ with an action leading to $W_{i}$. If the resulting set $W_i$ contains the initial belief state, the strategy consisting of the above actions for each belief state in $W_i$ is returned. The algorithm is summarized in Alg.~\ref{alg:1}. 

\begin{algorithm}[t]
\small
\begin{algorithmic}[1]
\REQUIRE $\blf = (\B,\A, T_{\blf},\bbf_\init,\obs, F_{\blf}, w)$
\ENSURE \eva{memoryless} strategy $\strat$ for the belief product $\blf$ 

\STATE $W_0 := F_{\blf}$
\STATE $\forall\, \bbf \in W_0:\, \wtg(\bbf) := 0$\\ $\forall\, \bbf \in (\B \backslash W_0):\, \wtg(\bbf) := \infty$
\STATE $i:=1$

\WHILE{$\bbf_\init\not \in W_i$ and exist $\bbf\in \B\backslash W_{i-1}$ and $\abf\in \A$ such that $\emptyset\neq T_\blf(\bbf,\abf) \subseteq W_{i-1}$}
\STATE $\bbf_{\min}:=\bot\qquad \abf_{\min}:=\bot\qquad \Delta_{\min}:=\infty$
\FOR{every $\bbf\in \B\backslash W_{i-1}$ and $\abf\in \A$ such that $\emptyset\neq T_\blf(\bbf,\abf) \subseteq W_{i-1}$}
\STATE $\Delta := \max \limits_{\bbf' \in T_\blf(\bbf,\abf)} \{w(\bbf,\abf) + \wtg(\bbf')\}$
\IF{$\Delta< \Delta_{\min}$}
\STATE $\bbf_{\min}:=\bbf\qquad \abf_{\min}:=\abf\qquad \Delta_{\min} := \Delta$
\ENDIF
\ENDFOR

\STATE $W_i := W_{i-1} \cup \{\bbf_{\min}\}$
\STATE $\strat(\bbf_{\min}) := \abf_{{\min}}$
\STATE $\wtg(\bbf_{\min}) := \Delta_{{\min}}$
\STATE $i:=i+1$
\ENDWHILE

\IF{$\bbf_\init \in W_i$}
\RETURN $\strat$
\ELSE
\RETURN no suitable strategy exists
\ENDIF
\end{algorithmic}
\caption{Constructing a strategy for the weighted belief product that maps to a solution of Problem~\ref{pf:task}.}\label{alg:1}
\end{algorithm}

\begin{proposition}[Correctness]\label{prop:corr1}
Alg.~\ref{alg:1} results in a strategy $\strat$ for the weighted belief product such that every run under $\strat$ that starts in $\bbf_\init$ eventually visits an accepting belief state, if such a strategy exists.
\end{proposition}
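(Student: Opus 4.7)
The plan is to split the statement into two parts: (i) soundness — if the algorithm returns a strategy $\strat$, then every run induced by $\strat$ from $\bbf_\init$ reaches $F_\blf$; and (ii) completeness — if some strategy guarantees reaching $F_\blf$ from $\bbf_\init$, then the algorithm does return one (rather than terminating with ``no suitable strategy exists''). Termination is immediate: each iteration enlarges $W_i$ by exactly one element, and $\B$ is finite, so the while loop halts after at most $|\B|$ iterations.

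For soundness, I would prove by induction on the least $i$ such that $\bbf \in W_i$ that every run of $\blf$ from $\bbf$ under $\strat$ visits $F_\blf$ within $i$ steps. The base case $i=0$ is trivial because $W_0 = F_\blf$. For the inductive step, when $\bbf_{\min}$ is added at iteration $i$, the loop guard together with the assignment $\strat(\bbf_{\min}) = \abf_{\min}$ guarantees $\emptyset \neq T_\blf(\bbf_{\min}, \abf_{\min}) \subseteq W_{i-1}$. Hence every successor of $\bbf_{\min}$ under $\strat$ lies in $W_{i-1}$, and by the induction hypothesis every continuation visits $F_\blf$ within $i-1$ further steps. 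Applied at $\bbf_\init$ this yields (i).

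For completeness, I would use a standard attractor argument for non-deterministic reachability. Define the controllable-predecessor operator $\mathrm{CPre}(X) = \{\bbf \in \B \mid \exists\, \abf \in \A,\; \emptyset \neq T_\blf(\bbf,\abf) \subseteq X\}$, and let $A_0 = F_\blf$, $A_{j+1} = A_j \cup \mathrm{CPre}(A_j)$, $A_* = \bigcup_j A_j$. Classical results on non-deterministic reachability (see, e.g., \cite{baierbook}) state that $A_*$ is precisely the set of belief states from which some memoryless strategy forces a visit to $F_\blf$, and that memoryless strategies suffice whenever any strategy does. I would then show by induction on $j$ that $A_j \subseteq \bigcup_i W_i$: if $\bbf \in A_{j+1} \setminus A_j$, then $\bbf$ has a witness action $\abf$ with $\emptyset \neq T_\blf(\bbf,\abf) \subseteq A_j$; by the induction hypothesis all such successors are eventually contained in some $W_{i-1}$, so from that point on $\bbf$ is an eligible candidate and the loop cannot terminate before adding it. Therefore, if some strategy from $\bbf_\init$ guarantees reaching $F_\blf$, then $\bbf_\init \in A_* \subseteq \bigcup_i W_i$, so the algorithm returns $\strat$.

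The main obstacle is the completeness direction: because the algorithm adds only one state per iteration (the greedy $\Delta$-minimizer, needed later for optimality), one must verify that this Dijkstra-style single-state update does not cause the loop to stop prematurely while some $\bbf \in A_*$ is still unprocessed. This reduces to the observation that the while guard is an existential quantification over all pairs $(\bbf,\abf)$, so as long as any eligible candidate exists the loop continues; consequently the set $W_*$ computed by the algorithm saturates to exactly the attractor $A_*$, and the greedy choice affects only which candidate is picked next, not whether eligible candidates are eventually exhausted.
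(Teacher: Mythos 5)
Your proposal is correct, and its first half coincides with the paper's argument: the paper's entire proof is a one-sentence sketch of exactly your soundness induction (if $\bbf$ enters $W_i$ at iteration $i$, then $\strat$ forces a visit to $F_\blf$ within $i$ steps, since the loop guard ensures $\emptyset\neq T_\blf(\bbf_{\min},\abf_{\min})\subseteq W_{i-1}$). What you add, and the paper leaves entirely implicit, is the completeness direction needed to justify the clause ``if such a strategy exists'': your attractor argument showing that the greedy single-state Dijkstra-style update cannot stall while an eligible candidate in $A_*$ remains unprocessed, together with the appeal to memoryless determinacy of non-deterministic reachability so that general (history-dependent) strategies give no additional winning states. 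This is a genuine strengthening of the written proof and is the right way to discharge that clause. One small imprecision: you say $W_*$ ``saturates to exactly the attractor $A_*$,'' but the loop also exits early as soon as $\bbf_\init\in W_i$, in which case $W_*$ may be a strict subset of $A_*$. This does not harm your argument --- in that branch the algorithm returns a strategy anyway, so completeness is only at stake when the loop runs to exhaustion, and there your saturation claim does hold --- but you should phrase the completeness case as conditional on the algorithm reporting failure.
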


\begin{proof}
The property can be proved by induction on the iteration counter $i$ and proving that starting from the state $\bbf_i\in W_i, \bbf_i\not \in W_{i-1}$, strategy $\strat$ guarantees a visit to an accepting belief state in at most $i$ steps. 
\end{proof}

\begin{proposition}[Optimality]\label{prop:opt1}
Let $\strat$ be the strategy resulting from Alg.~\ref{alg:1}. Then among all strategies that guarantee a visit to an accepting belief state, $\strat$ minimizes the value 
\begin{equation}\label{eq:opt}
\tac_\blf(\strat,\bbf_\init) = \max \limits_{\rho_{\strat}, \rho_{\strat}(0)=\bbf_\init}\quad \min \limits_{\rho^{\mathrm{acc}}\rho'=\rho_\strat} w(\rho^\mathrm{acc},\strat)
\end{equation}
where $\rho^\mathrm{acc}$ is a finite run ending in an accepting state and
\begin{equation*}
w(\rho^\mathrm{acc},\strat)=\sum \limits_{i=0}^{|\rho^{\mathrm{acc}}|-2} w\big(\rho^\mathrm{acc}(i),\strat(\eva{\rho^\mathrm{acc}(0)\dots\rho^\mathrm{acc}(i)})\big).
\end{equation*}
Intuitively, the value $V_\blf(\strat,\bbf)$ of a strategy $\strat$ with respect to a belief state $\bbf$ is the worst-case cumulative weight of the (earliest) visit to an accepting state using $\strat$ starting from $\bbf$.
\end{proposition}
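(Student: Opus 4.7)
The plan is to prove by induction on the iteration counter $i$ that, for every belief state $\bbf \in W_i$, the value $\wtg(\bbf)$ produced by Alg.~\ref{alg:1} equals the optimal cost $\wtg^*(\bbf) := \inf_{\strat'} \tac_\blf(\strat', \bbf)$, where the infimum ranges over all strategies $\strat'$ guaranteeing a visit to $F_\blf$ from $\bbf$, and moreover that the returned memoryless $\strat$ itself attains this optimum, i.e.\ $\tac_\blf(\strat, \bbf) = \wtg(\bbf)$. Specialising this equality to $\bbf = \bbf_\init$ at the terminating iteration then yields the proposition.

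The base case $i = 0$ is immediate: for an accepting belief state the empty accepting prefix already witnesses satisfaction, so $\wtg(\bbf) = \wtg^*(\bbf) = \tac_\blf(\strat, \bbf) = 0$. For the inductive step, I would fix the state $\bbf_i$ added at iteration $i$ with value $\Delta_{\min}$ via action $\abf_{\min}$. The equality $\tac_\blf(\strat, \bbf_i) = \Delta_{\min}$ will follow by unfolding: under $\strat$ the first transition costs $w(\bbf_i, \abf_{\min})$ and lands in some $\bbf' \in W_{i-1}$, from which the inductive hypothesis gives $\tac_\blf(\strat, \bbf') = \wtg(\bbf')$, so the worst-case sum is exactly $\Delta_{\min}$. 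This delivers the easy inequality $\wtg^*(\bbf_i) \le \Delta_{\min}$.

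The bulk of the work is the matching lower bound $\wtg^*(\bbf) \ge \Delta_{\min}$ for every $\bbf \in \B \setminus W_{i-1}$ with $\wtg^*(\bbf) < \infty$. My plan is to fix an almost-optimal strategy $\strat^*$ witnessing $\wtg^*(\bbf)$ and unfold the induced tree of runs from $\bbf$, truncating each branch at the first node whose belief state lies in $W_{i-1}$. Since $\strat^*$ is winning, every branch enters $W_{i-1}$ after finitely many steps, and since $\blf$ is finitely branching, K\"onig's lemma ensures the truncated tree is finite. The root $\bbf \notin W_{i-1}$ is internal, so a maximum-depth internal node $v$ exists, and by maximality every child of $v$ is a leaf. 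Hence the action $\abf_v$ that $\strat^*$ plays at the history leading to $v$ satisfies $\emptyset \ne T_\blf(\bbf_v, \abf_v) \subseteq W_{i-1}$, which is exactly the scan condition Alg.~\ref{alg:1} checks at iteration $i$. The greedy minimisation then forces $w(\bbf_v, \abf_v) + \max_{\bbf' \in T_\blf(\bbf_v, \abf_v)} \wtg^*(\bbf') \ge \Delta_{\min}$ using the inductive hypothesis on $W_{i-1}$, and since weights are non-negative and the run through $v$ witnesses the outer $\max$ defining $\tac_\blf(\strat^*, \bbf)$, this propagates to $\wtg^*(\bbf) \ge \Delta_{\min}$.

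The hard part, as just indicated, is bridging the gap between a possibly history-dependent $\strat^*$ and the algorithm's purely local, memoryless greedy rule. Truncating at the $W_{i-1}$ boundary and extracting a single maximum-depth internal node via K\"onig's lemma is what reduces the comparison to a single application of the greedy step. Non-negativity of the weights together with the equality $\wtg = \wtg^*$ already established on $W_{i-1}$ will then suffice to close the argument; zero-cost actions present no difficulty, since the algorithm's comparison is over the full $w + \max \wtg$ quantities to which the inductive equality applies directly.
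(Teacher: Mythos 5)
Your proof follows essentially the same route as the paper's: an induction on the iteration counter maintaining $\wtg(\bbf)=\tac_\blf(\strat,\bbf)=$ optimal value on $W_i$, with the lower bound obtained by locating, along the runs of an arbitrary competing winning strategy, a frontier belief state outside $W_{i-1}$ whose chosen action leads entirely into $W_{i-1}$, and then invoking the minimisation in line~7 together with non-negativity of the weights. Your K\"onig's-lemma/maximum-depth argument simply makes rigorous the existence of that frontier state, which the paper asserts more informally; otherwise the two arguments coincide.
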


\begin{proof}
\eva{We show by induction that after every iteration $i\geq 1$, it holds that $\tac_\blf(\strat,\bbf_i)=\wtg(\bbf_i)$ for all states $\bbf_i\in W_i$, \ie the strategy $\strat$ realizes the values $\wtg(\bbf_i)$, and that the strategy $\strat$ minimizes the value $\tac_\blf(\cdot,\bbf_i)$ among all strategies that guarantee visit to an accepting state.}

\eva{
Assume that the strategy $\strat$ is computed in $n$ iterations of the ``while'' cycle in line 4, \ie $W_n$ is the resulting fixed point set. Consider a belief state $\bbf_i$ that was added to the set $W_n$ in $i$-th iteration, \ie $\bbf_i\not \in W_{i-1}, \bbf_i\in W_i$. Assume that for all $j< i$ it holds that $\strat$ minimizes the value $\tac_\blf(\cdot,\bbf_j)$ for every $\bbf_j \in W_j$ and that $\tac_\blf(\strat,\bbf_j)=\wtg(\bbf_j)$. Trivially, $\strat$ minimizes the value for all accepting belief states $\bbf\in F_\blf$ as $\tac_\blf(\strat,\bbf)=0=\wtg(\bbf)$. We show that $\strat$ then also minimizes the value $\tac_\blf(\cdot,\bbf_i)$ over all strategies and $\tac_\blf(\strat,\bbf_i) = \wtg(\bbf_i)$.}

\eva{
Assume by contradiction that there exists a (possibly not memoryless) strategy $\strat'$ for $\blf$  such that $\tac_\blf(\strat',\bbf_i)<\tac_\blf(\strat,\bbf_i)$. As $\strat$ is optimal for all $\bbf_j, j<i$, it must hold that  
\begin{equation}\label{eq:opthelp}
\tac_\blf(\strat',\bbf_j)=\wtg(\bbf_j)=\tac_\blf(\strat,\bbf_j).
\end{equation}
Let $\bbf \notin W_{i-1}$ be a belief state such that there exists a run $\sigma_{\strat'}$ under $\strat'$ that leads from $\bbf_i$ through $\bbf$ to an accepting belief state and $T_\blf(\bbf,\strat'(\sigma^\bbf_{\strat'}))\subseteq W_{i-1}$, where $\sigma^\bbf_{\strat'}$ is a prefix of $\sigma_{\strat'}$ ending in the state $\bbf$. Note that such $\bbf$ must exist since $\strat'$ guarantees a visit to an accepting state and $F_\blf\subseteq W_{i-1}$ (be aware that $\bbf$ can be $\bbf_i$ itself). Since the cumulative weight $w(\sigma^\bbf_{\strat'},\strat')$ is non-negative and the action $\strat(\bbf_i)$ minimizes the value in line 7, it holds that the cumulative weight $w(\sigma_{\strat'},\strat')$ is higher or equal to the cumulative weight of any run $\sigma_\strat$ under $\strat$ starting in $\bbf_i$ leading to an accepting belief state. Hence $\tac_\blf(\strat',\bbf_i)\geq\tac_\blf(\strat,\bbf_i)$ and strategy C is optimal.
}
\end{proof}

\subsection{Constructing a strategy for the NTS}\label{subsec:ntsstrat}

Let $\strat_\blf$ be the strategy for the weighted belief product $\blf$ resulting from Alg.~\ref{alg:1}. Consider the following (observation-based control and observation scheduling) strategy $\strat$ for the NTS $\nts$ with observation modes $\obsm$. For a finite sequence of observations $\sigma_\obs\in (2^\obs)^*$, we define
\begin{equation}\label{eq:strategynts}
\strat(\sigma_\obs) = \strat_\blf(\bbf),
\end{equation}
where $\bbf$ is the last state of the finite run $\sigma_\blf$ of the belief product that corresponds to $\sigma_\obs$ as described in Cor.~\ref{cor:belief}, if such run exists. 

\begin{theorem}
Let $\strat_\blf$ be the strategy for the weighted belief product $\blf$ resulting from Alg.~\ref{alg:1}. Then the strategy $\strat$ for the NTS with observation modes constructed according to Eq.~\ref{eq:strategynts} is a solution to Problem~\ref{pf:task}.
\end{theorem}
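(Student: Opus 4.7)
The plan is to prove the two conditions of Problem~\ref{pf:task} separately by establishing a tight correspondence between runs of the NTS under $\strat$ and runs of the weighted belief product under $\strat_\blf$, and then invoking Propositions~\ref{prop:corr1} and~\ref{prop:opt1}.

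First, I would formalize the run correspondence. For any infinite run $\rho_\strat = (s_0,m_0)(s_1,m_1)\ldots$ of the NTS induced by $\strat$, consider its lifted product run $\hat\rho = (s_0,q_0)(s_1,q_1)\ldots$ where $q_{i+1}=\delta(q_i,L(s_i))$, together with the observation trace $\obsf(\rho_\strat)$. By Corollary~\ref{cor:belief}, this observation trace corresponds to exactly one (finite prefix of a) run $\rho_{\strat_\blf} = \bbf_0 \bbf_1 \ldots$ of $\blf$ under $\strat_\blf$, with $\bbf_0 = \bbf_\init = \{(s_\init,q_0)\}$. The key invariant, provable by induction on $i$ using Definitions~\ref{def:product} and~\ref{def:belief} together with Eq.~\ref{eq:strategynts}, is that $(s_i,q_i) \in \bbf_i$ and that the belief action $\strat_\blf(\bbf_i) = (a_i,m_{i+1})$ is precisely the action/observation-mode pair that $\strat$ selects after observing $\obsf_{m_0}(s_0)\ldots \obsf_{m_i}(s_i)$. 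Conversely, for every finite run of $\blf$ under $\strat_\blf$ and every product state $(s,q)$ in its final belief, there exists a compatible finite run of the NTS under $\strat$ whose lift reaches $(s,q)$.

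With this correspondence, condition (1) follows quickly. By Proposition~\ref{prop:corr1}, every infinite run of $\blf$ under $\strat_\blf$ starting in $\bbf_\init$ visits some belief state $\bbf \in F_\blf$, and by Definition~\ref{def:belief} every product state in $\bbf$ lies in $F_\product$, i.e.\ its DFA component is accepting. Applied to an arbitrary $\rho_\strat$ with $\rho_\strat(0) = (s_\init,m_\init)$, the invariant above gives an index $n$ such that $q_n \in F$, so $L(s_0)\ldots L(s_n)$ is a good finite prefix for $\task$; hence $\rho_\strat \models \task$. For condition (2), observe that by the definition of the cost function in Eq.~\ref{eq:toc} and the weight function $w(\bbf,(a,m))=\cost_m$ in Definition~\ref{def:belief}, the cost of the good prefix of $\rho_\strat$ equals the cumulative weight of the corresponding prefix of $\rho_{\strat_\blf}$ ending in $F_\blf$. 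Taking the worst case over NTS runs and the earliest accepting prefix therefore yields $\tac(\strat,(s_\init,m_\init),\task) = \tac_\blf(\strat_\blf,\bbf_\init)$.

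Finally, optimality of $\strat$ among all NTS strategies satisfying $\task$ reduces to optimality of $\strat_\blf$ given by Proposition~\ref{prop:opt1}. The argument is by contradiction: given any observation-based strategy $\strat'$ for the NTS satisfying $\task$ from $(s_\init,m_\init)$ with strictly smaller value, define a (possibly history-dependent) strategy $\strat'_\blf$ for $\blf$ by lifting $\strat'$ along the observation-to-belief map from Corollary~\ref{cor:belief}; this is well-defined precisely because $\strat'$ depends only on the observation history. The same cost-weight identity shows $\tac_\blf(\strat'_\blf,\bbf_\init) < \tac_\blf(\strat_\blf,\bbf_\init)$, and because accepting beliefs $F_\blf$ certify the good-prefix property for every compatible NTS run, $\strat'_\blf$ indeed guarantees visiting $F_\blf$, contradicting Proposition~\ref{prop:opt1}. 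The main technical obstacle I expect is verifying this last direction carefully: ensuring that a worst-case NTS run forced by $\strat'$ corresponds to a worst-case belief run (rather than a strictly worse one), which hinges on showing that the supremum over product states in each reached belief is actually realized by some compatible NTS run, a fact that follows from the converse half of the run correspondence stated above.
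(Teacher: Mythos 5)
Your proposal is correct and follows essentially the same route as the paper, which proves the theorem in two sentences by citing Prop.~\ref{prop:corr1} for satisfaction of $\task$, Prop.~\ref{prop:opt1} for optimality, and the identity $\tac(\strat,(s_\init,m_\init),\task)=\tac_\blf(\strat_\blf,\bbf_\init)$. You simply make explicit the run/belief correspondence and the lifting of competitor strategies that the paper leaves implicit.
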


\begin{proof}
The correctness with respect to the scLTL formula $\task$ follows directly from Prop.~\ref{prop:corr1}. The optimality of $\strat$ follows from Prop.~\ref{prop:opt1} and the fact that $\tac(\strat,(s_\init,m_\init),\task)=\tac_\blf(\strat_\blf,\bbf_\init)$.
\end{proof}

\textit{Complexity.} Given an scLTL formula $\task$ the number of states of a corresponding minimal DFA $\dfa$ is in general doubly exponential in the size of the formula. \eva{However, compared to the size of the NTS, the size of the automaton typically does not play a crucial role in the overall complexity}. The product $\product$ of the NTS $\nts$ and $\dfa$ is then of size $\mathcal{O}(|S|\times |Q|)$. The belief product $\blf$ involves a subset construction over the product, hence its size is in $\mathcal{O}(2^{|S|\times |Q|})$. In order to minimize the complexity in practice, only the reachable states of both the product and the belief product are constructed. With a proper choice of a data structure storing the belief product $\blf$, Alg.~\ref{alg:1} runs in time $\mathcal{O}(|\B|\cdot \log |\B| + |\A|\cdot \mathrm{dn})$, where $\mathrm{dn}$ is the degree of non-determinism of the NTS $\nts$, \ie the maximum number of possible successors given a state and an action. \eva{Note that while the algorithms are polynomial with respect to their input, \ie the belief product, they are exponential in the size of the original NTS.} 


\section{Bounded optimal scLTL control}\label{sec:solutionbounded}

In order to solve the bounded version of Problem~\ref{pf:task}, \ie Problem~\ref{pf:taskbounded}, we proceed as follows. As in the case for the general problem, we first construct the product $\product$ of the NTS $\nts$ with a DFA $\dfa$ for the scLTL formula $\task$ and the corresponding belief product $\blf$ as proposed in Sec.~\ref{subsec:product} and \ref{subsec:belief}, respectively. To compute a strategy for the belief product from Sec.~\ref{subsec:beliefstrat}, we use an alternation of Alg.~\ref{alg:1} presented below and summarized as Alg.~\ref{alg:2}. Intuitively, as Alg.~\ref{alg:1} builds on the principles of Dijkstra's algorithm, Alg.~\ref{alg:2} follows the idea behind Bellman-Ford algorithm for solving the bounded shortest path problem in weighted graphs~\cite{algorithms}. We prove properties of the resulting strategy $\strat_\blf$ for the belief product and argue that when mapped to the original system as described in Sec.~\ref{subsec:ntsstrat}, we obtain a correct and optimal solution to Problem~\ref{pf:taskbounded}.  

\subsection{Constructing a strategy for the belief product}

\begin{algorithm}[t]
\small
\begin{algorithmic}[1]
\REQUIRE $\blf = (\B,\A, T_{\blf},\bbf_\init,\obs, F_{\blf}, w)$, bound $k\geq 1$
\ENSURE strategy $\strat$ for the belief product $\blf$ 

\STATE $W_0 := F_{\blf}$
\STATE $\forall\, \bbf \in W_0:\,\wtg(\bbf) := 0$\\ $\forall\, \bbf \in (\B \backslash W_0):\, \wtg(\bbf) := \infty$
\STATE $i:=1$

\WHILE{$i\leq k$}

\FOR{every $\bbf\in \B$}

\STATE $\abf_{\min}^\bbf:=\bot\qquad \Delta_{\min}^\bbf:=\wtg(\bbf)$
\FOR{every $\abf\in \A$ such that $\emptyset\neq T_\blf(\bbf,\abf) \subseteq W_{i-1}$}

\STATE $\Delta := \max \limits_{\bbf' \in T_\blf(\bbf,\abf)} \{w(\bbf,\abf) + \wtg(\bbf')\}$
\IF{$\Delta< \Delta_{\min}^\bbf$}
\STATE $\abf_{\min}^\bbf = \abf\qquad \Delta_{\min}^\bbf := \Delta$
\ENDIF

\ENDFOR

\ENDFOR

\STATE $W_i := W_{i-1}$

\FOR{every state $\bbf\in \B$}

\STATE $\strat(\bbf) := \abf_{\min}^\bbf$
\STATE $\wtg(\bbf) := \Delta_{\min}^\bbf$

\IF{$\wtg(\bbf)<\infty$}
\STATE $W_i:= W_i\cup \{\bbf\}$
\ENDIF

\ENDFOR

\STATE $i:=i+1$
\ENDWHILE

\IF{$\bbf_\init \in W_i$}
\RETURN $\strat$
\ELSE
\RETURN no suitable strategy exists for given bound
\ENDIF
\end{algorithmic}
\caption{Constructing a strategy for the weighted belief product and the given bound that maps to a solution of Problem~\ref{pf:taskbounded}.}\label{alg:2}
\end{algorithm}

In this section, we describe an algorithm that constructs a strategy for the weighted belief product that guarantees a visit to an accepting belief state within $k$ steps (if such a strategy exists) and minimizes the worst-case cumulative weight. 

Instead of computing the weight-to-go value $\wtg(\bbf)$ for a single well-chosen belief state $\bbf$ at a time as in Alg.~\ref{alg:1}, in Alg.~\ref{alg:2} we update the value in parallel for all states in every iteration. We show that the set $W_i$ which is the set of all belief states for which $\wtg(\bbf)\neq \infty$ after $i$-th iteration, consists of all belief states $\bbf$ that can reach an accepting belief state in at most $i$ steps and with the worst-case cumulative weight $\wtg(\bbf)$. The algorithm terminates after $k$, but at most $|\B|-1$, iterations. If the resulting set $W_i$ contains the initial belief state, the strategy consisting of the chosen actions for each belief state in $W_i$ is returned.

\begin{proposition}[Correctness]\label{prop:corr2}
Alg.~\ref{alg:2} results in a strategy $\strat$ for the weighted belief product such that every run under $\strat$ that starts in $\bbf_\init$ visits an accepting belief state in at most $k$ steps, if such a strategy exists.
\end{proposition}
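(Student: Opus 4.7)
The plan is to prove by induction on the iteration counter $i$ the invariant that, after iteration $i$ of Alg.~\ref{alg:2}, $W_i$ coincides with the set of belief states from which the current strategy $\strat$ guarantees a visit to $F_\blf$ in at most $i$ steps, and moreover every belief state that admits some strategy reaching $F_\blf$ in at most $i$ steps lies in $W_i$. Specializing the invariant to $i = k$ then yields the proposition: if such a strategy exists from $\bbf_\init$, it ends up recognized by the algorithm, and the returned $\strat$ meets the step bound.

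The base case $i = 0$ is immediate: line 1 sets $W_0 = F_\blf$, and from any $\bbf \in W_0$ the empty run already witnesses the 0-step bound; conversely, any state from which a strategy reaches $F_\blf$ in $0$ steps must itself be accepting. For the inductive step, I would fix $\bbf \in W_i$ and use the construction in lines 7 and 14--20 to observe that, at the end of iteration $i$, the action $\strat(\bbf)$ either was inherited from an earlier iteration or was newly assigned subject to $\emptyset \neq T_\blf(\bbf,\strat(\bbf)) \subseteq W_{i-1}$; in either case one step under $\strat$ lands in some $\bbf' \in W_{i-1}$. Applying the inductive hypothesis at $\bbf'$ gives at most $i-1$ further steps to reach $F_\blf$, for a total of at most $i$. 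For the converse direction, if a strategy $\strat^*$ guarantees reaching $F_\blf$ from $\bbf$ in at most $i$ steps, then $\strat^*(\bbf)$ is an action whose successors all reach $F_\blf$ in at most $i-1$ steps under $\strat^*$, so by the inductive hypothesis those successors already lie in $W_{i-1}$; thus $\bbf$ satisfies the condition in line 7 during iteration $i$, so $\Delta_{\min}^\bbf < \infty$ and $\bbf$ is added to $W_i$ in lines 18--19.

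The main obstacle is the parallel nature of the updates in iteration $i$: when I apply the inductive hypothesis at the successor $\bbf' \in W_{i-1}$, I must use the \emph{post-update} strategy, yet that strategy may have been rewritten during iteration $i$ itself. To handle this I would strengthen the invariant so that it refers to $\strat$ as it stands at the end of iteration $i$, and argue that any fresh assignment in iteration $i$ at a state $\bbf' \in W_{i-1}$ still satisfies $T_\blf(\bbf',\strat(\bbf')) \subseteq W_{i-1}$ by line 7; combined with the strict-improvement condition in line 9, the non-negativity of the weights, and the monotonic decrease of $\wtg$ across iterations, this rules out cycles in the induced strategy graph on $W_i$ and propagates the $\leq i-1$ bound through the updated successors. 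The termination condition in lines 24--28 then guarantees that $\strat$ is returned precisely when $\bbf_\init \in W_k$, closing both directions of the proposition.
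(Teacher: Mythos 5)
Your plan follows the same route as the paper's (very terse) proof sketch: induction on the iteration counter $i$ with the invariant that every $\bbf\in W_i$ reaches $F_\blf$ under $\strat$ in at most $i$ steps, plus a completeness direction showing that every state admitting a $\le i$-step strategy enters $W_i$. The completeness direction and the base case are fine, and you correctly isolate the crux that the paper glosses over: when you descend to a successor $\bbf'\in W_{i-1}$, the inductive hypothesis speaks about the strategy as it stood at the end of iteration $i-1$, whereas the run you must bound follows the strategy after the iteration-$i$ rewrites. The problem is that your proposed repair does not close this gap. Acyclicity of the induced strategy graph together with the monotone decrease of $\wtg$ gives you eventual reachability of $F_\blf$ (which is what Prop.~\ref{prop:corr1} needs), but it does not ``propagate the $\le i-1$ bound through the updated successors'': a rewrite at a state of $W_{i-1}$ is only constrained by line~7 to send it back into $W_{i-1}$, not into $W_{i-2}$, so each simultaneous rewrite can lengthen the induced run by one step without ever creating a cycle.

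The step genuinely fails. Take a chain $\bbf_1,\bbf_2,\bbf_3$ where each $\bbf_j$ has an expensive action leading directly to an accepting state (weights, say, $100$, $50$, $10$) and a zero-weight action leading to $\bbf_{j+1}$. After iteration $1$ all three states enter $W_1$ with the direct actions recorded; in iteration $2$ the parallel relaxation redirects $\bbf_1$ to $\bbf_2$ \emph{and} $\bbf_2$ to $\bbf_3$ in the same sweep, so the returned memoryless strategy drives $\bbf_1$ to the accepting state only after three steps even though $\bbf_1\in W_1\subseteq W_2$ and $k=2$. So the invariant you (and the paper) rely on is false for Alg.~\ref{alg:2} as written: the Bellman--Ford-style relaxation computes the correct bounded values $\wtg$, but a single action per belief state cannot simultaneously realize the optimal choice for every remaining budget. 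To make the induction sound you must change what the algorithm records: either freeze $\strat(\bbf)$ once $\bbf$ first enters some $W_j$ (which restores the step bound but forfeits the optimality claimed in Prop.~\ref{prop:opt2}), or keep a budget-indexed strategy $\strat_i(\bbf)$ and play $\strat_{k-t}$ at time $t$, the standard finite-horizon dynamic-programming fix, under which your inductive argument goes through verbatim.
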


\begin{proof}
The property can be proved by induction on the iteration counter $i$ and proving that starting from any state $\bbf_i\in W_i$, strategy $\strat$ guarantees a visit to an accepting belief state in at most $i$ steps.
\end{proof}

\begin{proposition}[Optimality]\label{prop:opt2}
Let $\strat$ be the strategy resulting from Alg.~\ref{alg:2}. Then among all strategies that guarantee a visit to an accepting belief state in at most $k$ steps, $\strat$ minimizes the value in Eq.~\ref{eq:opt}.
\end{proposition}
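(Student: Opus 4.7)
The plan is to mirror the argument used for Prop.~\ref{prop:opt1}, adapting it to track the remaining step budget. Specifically, I would prove by induction on the iteration counter $i$ that after the $i$-th pass of the outer \textbf{while} loop of Alg.~\ref{alg:2}, for every belief state $\bbf\in W_i$ the value $\wtg(\bbf)$ equals the minimum, over all (possibly history-dependent) strategies that guarantee reaching an accepting belief state from $\bbf$ in at most $i$ steps, of the worst-case cumulative weight of such a run, and that the memoryless strategy $\strat$ produced by the algorithm realises this minimum. Applying the claim with $i=k$ to $\bbf_\init$ then yields the proposition.

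The base case $i=0$ is immediate, since $W_0=F_\blf$ consists precisely of the belief states that are already accepting, with zero steps and zero accumulated weight needed. For the inductive step, I would fix a belief state $\bbf$ together with any competing strategy $\strat'$ that reaches acceptance from $\bbf$ in at most $i$ steps. Its first action $\abf^*=\strat'(\bbf)$ must send every successor $\bbf'\in T_\blf(\bbf,\abf^*)$ into $W_{i-1}$, since the residual behaviour of $\strat'$ from $\bbf'$ still has to guarantee acceptance within the remaining budget of $i-1$ steps. Invoking the inductive hypothesis at each successor separately yields
\[
\tac_\blf(\strat',\bbf)\;\geq\;w(\bbf,\abf^*)\;+\;\max_{\bbf'\in T_\blf(\bbf,\abf^*)}\wtg(\bbf'),
\]
and the right-hand side is exactly one of the candidates compared on lines 7--8 of Alg.~\ref{alg:2}, so $\tac_\blf(\strat',\bbf)\geq\Delta_{\min}^\bbf$. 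For the matching upper bound, following $\strat(\bbf)=\abf_{\min}^\bbf$ and then continuing with $\strat$ from each successor (all of which lie in $W_{i-1}$) realises worst-case cost exactly $\Delta_{\min}^\bbf$ in at most $i$ steps, once more by the inductive hypothesis.

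The step I expect to require the most care is handling competitors that are not memoryless: because such a $\strat'$ may depend on the entire observation history, its ``residual'' after one step from $\bbf$ is really a whole family of strategies, one per successor $\bbf'$. The clean way around this is to apply the inductive hypothesis at each successor separately, using that any residual strategy from $\bbf'$ with budget $i-1$ incurs worst-case cost at least $\wtg(\bbf')$. A secondary bookkeeping point is that the parallel updates on lines 14--22 could in principle overwrite $\wtg$ values within a single iteration, but since $\Delta$ on line 8 is evaluated using only the $\wtg$ values committed at the end of iteration $i-1$ (the assignments being deferred to the second \textbf{for} loop), the induction on the step budget remains sound.
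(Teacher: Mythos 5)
Your proof is correct and follows essentially the same route as the paper's: induction on the iteration counter $i$, showing that any competitor's first action from $\bbf$ must lead into $W_{i-1}$ (so its cost is bounded below by one of the candidates $w(\bbf,\abf)+\max_{\bbf'}\wtg(\bbf')$ minimized in lines 7--9), with the algorithm's choice realising that bound. The only difference is presentational --- you give a direct two-sided bound and explicitly handle history-dependent competitors and the deferred $\wtg$ updates, where the paper argues by contradiction and leaves those points implicit.
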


\begin{proof}
\eva{Let $C_i$ and $\wtg_i$ denote the strategy and the weight-to-go computed by Alg.~\ref{alg:2} before start of the $(i+1)$-th iteration. We show by induction that after every iteration $i\geq 1$, it holds that $\tac_\blf(\strat_i,\bbf_i)=\wtg_i(\bbf_i)$ for all states $\bbf_i\in W_i$, and that strategy $\strat_i$ minimizes the value $\tac_\blf(\cdot,\bbf_i)$ among all strategies that guarantee visit to an accepting state in at most $i$ steps.}

\eva{
Assume that for all $j< i$ it holds $\tac_\blf(\strat_j,\bbf_j)=\wtg_j(\bbf_j)$ for all states $\bbf_j\in W_j$, and that strategy $\strat_j$ minimizes the value $\tac_\blf(\cdot,\bbf_j)$ among all strategies that guarantee visit to an accepting state in at most $j$ steps. Trivially, $\strat_0$ minimizes the value for all accepting belief states $\bbf\in F_\blf$ as $\tac_\blf(\strat_0,\bbf)=0=\wtg_0(\bbf)$. We show that $\strat_i$ then minimizes the value $\tac_\blf(\cdot,\bbf_i)$ over all strategies that guarantee visit to an accepting state in at most $i$ steps and $\tac_\blf(\strat_i,\bbf_i) = \wtg_i(\bbf_i)$.}

\eva{
Assume by contradiction that there exists a strategy $\strat'$ for $\blf$  such that $\tac_\blf(\strat',\bbf_i)<\tac_\blf(\strat_i,\bbf_i)$ and guarantee visit to an accepting state for all $\bbf_i \in W_i$ in at most $i$ steps. Recall that $W_{i-1}$ contains exactly those states that guarantee visit to an accepting state in at most $i-1$ steps (see proof of Prop.~\ref{prop:corr2}). Let $\bbf_i$ be an arbitrary belief state from $W_i$. Since $C'$ is winning in at most $i$ steps, it holds $T_\blf(\bbf_i,C'(\bbf_i)) \subseteq W_{i-1}$. From the induction hypothesis and the fact the action $C_i(\bbf_i)$ minimizes the value in line 7 it follows that $\tac_\blf(\strat',\bbf_i)\geq\wtg_i(\bbf_i)=\tac_\blf(\strat_i,\bbf_i)$ implying strategy $C_i$ is optimal strategy among all strategies that guarantee visit to an accepting state in at most $i$ steps.}
\end{proof}

\begin{theorem}
Let $\strat_\blf$ be the strategy for the weighted belief product $\blf$ resulting from Alg.~\ref{alg:2}. Then the strategy $\strat$ for the NTS with observation modes constructed according to Eq.~\ref{eq:strategynts} is a solution to Problem~\ref{pf:taskbounded}.
\end{theorem}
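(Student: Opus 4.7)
The plan is to mirror the proof of the analogous unbounded theorem in Sec.~\ref{sec:solution}, leveraging the propositions already established for Alg.~\ref{alg:2} together with the tight correspondence between runs of the NTS with observation modes and runs of the weighted belief product given in Cor.~\ref{cor:belief}. Since the theorem has two obligations (satisfaction of $\task$ within $k$ steps and optimality of the value $\tac$), I will argue them separately, each by translating the corresponding statement about $\strat_\blf$ through Eq.~\ref{eq:strategynts}.

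For correctness, first I would invoke Prop.~\ref{prop:corr2}: every run of $\blf$ under $\strat_\blf$ starting from $\bbf_\init$ reaches a belief state in $F_\blf$ within $k$ steps. By Cor.~\ref{cor:belief}, every finite run $\sigma_\strat = (s_0,m_0)\ldots (s_n,m_n)$ of $\nts$ induced by $\strat$ corresponds to exactly one finite run $\sigma_\blf$ of $\blf$ whose $i$-th belief state contains the product state $(s_i,q_i)$, where $q_i = \delta(q_{i-1},L(s_{i-1}))$ with $q_0$ the DFA initial state. Since $\strat$ was defined by $\strat(\sigma_\obs)=\strat_\blf(\bbf)$ on the last state of $\sigma_\blf$, this correspondence is preserved along the execution. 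Hence after at most $k$ steps the accompanying belief state is contained in $F_\blf$, which by the construction of $F_\blf$ in Def.~\ref{def:belief} forces the DFA component to lie in $F$. Consequently $L(s_0)\ldots L(s_n)$ is a good finite prefix for $\task$, so $\strat$ satisfies $\task$ from $(s_\init,m_\init)$ within $k$ steps.

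For optimality, the key equality to establish is
\begin{equation*}
\tac(\strat,(s_\init,m_\init),\task) \;=\; \tac_\blf(\strat_\blf,\bbf_\init),
\end{equation*}
and the analogous identity for any competing observation-based strategy $\strat'$ for $\nts$ that also satisfies $\task$ within $k$ steps. The direction $\tac(\strat,\cdot,\task) = \tac_\blf(\strat_\blf,\bbf_\init)$ is immediate from Cor.~\ref{cor:belief}, since the costs $\cost_m$ of Eq.~\ref{eq:toc} coincide with the weights $w(\bbf,(a,m))=\cost_m$ of Def.~\ref{def:belief}, and the earliest visit to $F_\product$ on a product run matches the earliest visit to $F_\blf$ on the associated belief run. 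For the other direction, I would show that every observation-based strategy $\strat'$ for $\nts$ that satisfies $\task$ within $k$ steps induces (via the same observation-to-belief correspondence) a strategy $\strat'_\blf$ for $\blf$ that reaches $F_\blf$ within $k$ steps with $\tac_\blf(\strat'_\blf,\bbf_\init)=\tac(\strat',(s_\init,m_\init),\task)$; applying Prop.~\ref{prop:opt2} to $\strat'_\blf$ then yields $\tac_\blf(\strat_\blf,\bbf_\init)\leq \tac_\blf(\strat'_\blf,\bbf_\init)$, which transports back to $\tac(\strat,\cdot,\task)\leq \tac(\strat',\cdot,\task)$.

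The main obstacle I anticipate is this last translation step: one must check carefully that an arbitrary (possibly history-dependent) NTS strategy, which branches on sequences of observations in $(2^\obs)^*$, really does collapse into a well-defined strategy on the belief product without loss. This is the standard sufficiency-of-beliefs argument, and it hinges on Cor.~\ref{cor:belief}'s bijection between observation sequences and belief runs: two NTS histories producing the same observational trace necessarily correspond to the same belief-product history and must be treated identically by any observation-based strategy. Once this sufficiency is spelled out, the two inequalities combine to give the required optimality, completing the proof.
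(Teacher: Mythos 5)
Your proposal is correct and follows essentially the same route as the paper's own (very terse) proof, which likewise reduces correctness to Prop.~\ref{prop:corr2} and optimality to Prop.~\ref{prop:opt2} together with the identity $\tac(\strat,(s_\init,m_\init),\task)=\tac_\blf(\strat_\blf,\bbf_\init)$. You merely spell out the observation-sequence-to-belief-run correspondence and the sufficiency-of-beliefs step that the paper leaves implicit, which is a faithful elaboration rather than a different argument.
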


\begin{proof}
The correctness with respect to the scLTL formula $\task$ follows directly from Prop.~\ref{prop:corr2} and the optimality of $\strat$ follows from Prop.~\ref{prop:opt2} and the fact that $\tac(\strat,(s_\init,m_\init),\task)=\tac_\blf(\strat_\blf,\bbf_\init)$.
\end{proof}

\textit{Complexity.} The size of a minimal DFA for $\task$, the product and the belief product are discussed in Sec.~\ref{subsec:ntsstrat}. Similarly as for Alg.~\ref{alg:1}, with a proper choice of a data structure storing the belief product $\blf$, Alg.~\ref{alg:2} runs in time $\mathcal{O}(k\cdot |\B|\cdot |\A|\cdot \mathrm{dn})$, where $\mathrm{dn}$ is the degree of non-determinism of $\nts$, \ie the maximum number of possible successors given a state and an action. Here, the value $|\B|\cdot |\A|\cdot \mathrm{dn}$ serves as the upper bound on the number of all edges in the belief product. Note that  Alg.~\ref{alg:2} can be terminated prematurely if the current iteration $i$ of the while loop in line 4 did not imply any change in the function $\wtg$ or if $i\geq |\B|-1$ as every strategy for the belief product that guarantees a visit to an accepting belief state must do so in at most $|\B|-1$ steps due to non-determinism. 

\begin{remark}
Note that Alg.~\ref{alg:2} can be used not only to solve the bounded Problem~\ref{pf:taskbounded}, but also the general Problem~\ref{pf:task} by considering $k=|\B|-1$. However, this solution to Problem~\ref{pf:task} has higher computational complexity in practice than the one presented in Sec.~\ref{sec:solution} using Alg.~\ref{alg:1}.
\end{remark}


\section{Case Study}\label{sec:cs}

\newcommand{\north}{\mathtt{N}}
\newcommand{\south}{\mathtt{S}}
\newcommand{\east}{\mathtt{E}}
\newcommand{\west}{\mathtt{W}}
\newcommand{\northA}{\mathsf{N}}
\newcommand{\southA}{\mathsf{S}}
\newcommand{\eastA}{\mathsf{E}}
\newcommand{\westA}{\mathsf{W}}
\newcommand{\northwest}{\mathtt{NW}}
\newcommand{\northeast}{\mathtt{NE}}
\newcommand{\southwest}{\mathtt{SW}}
\newcommand{\southeast}{\mathtt{SE}}
\newcommand{\detected}{\mathtt{det}}
\newcommand{\dang}{\mathtt{dang}}
\newcommand{\target}{\mathtt{target}}


\begin{figure}[t]
\begin{center}
\begin{tabular}{c c c}
\scalebox{0.15}{\input{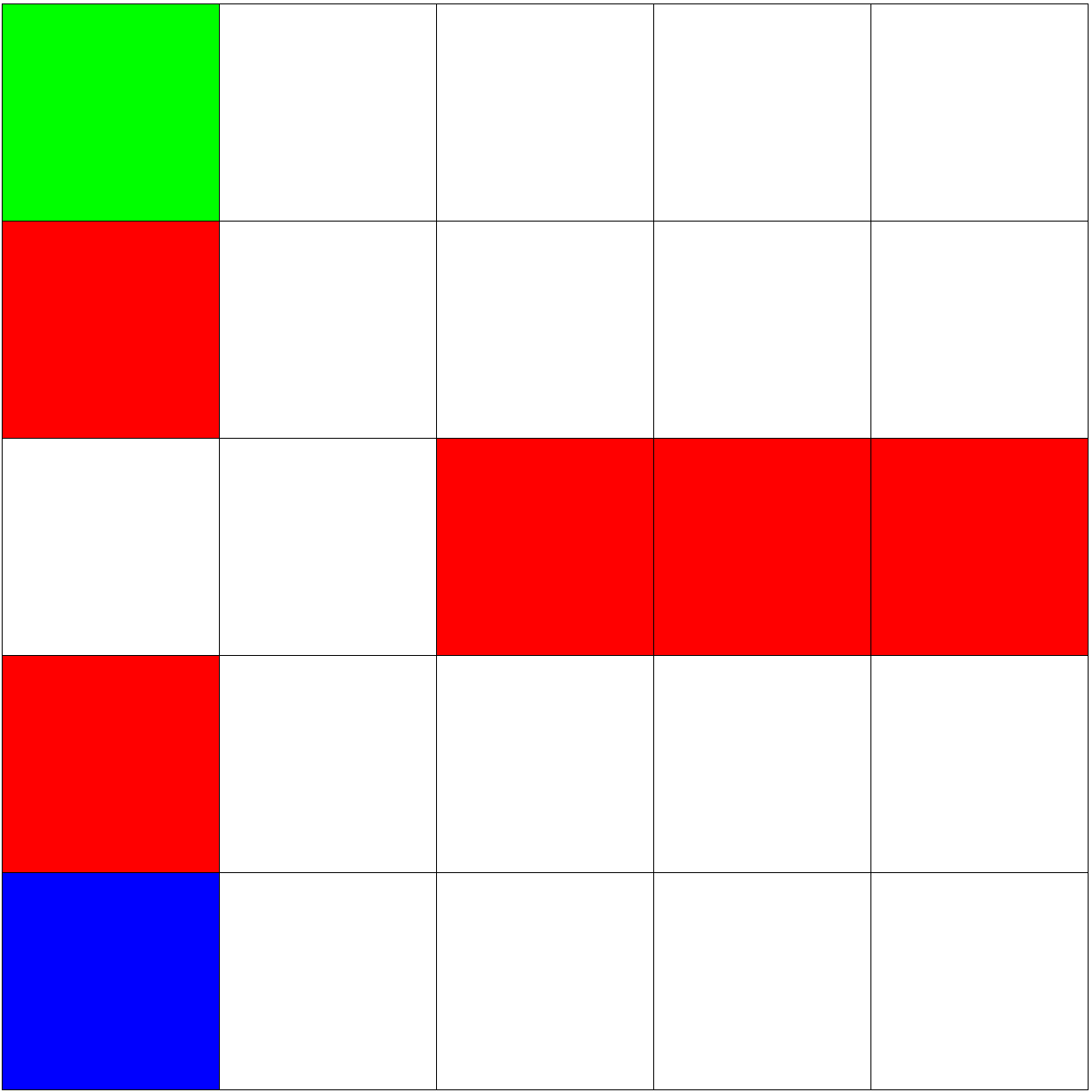_t}}  & \scalebox{0.15}{\input{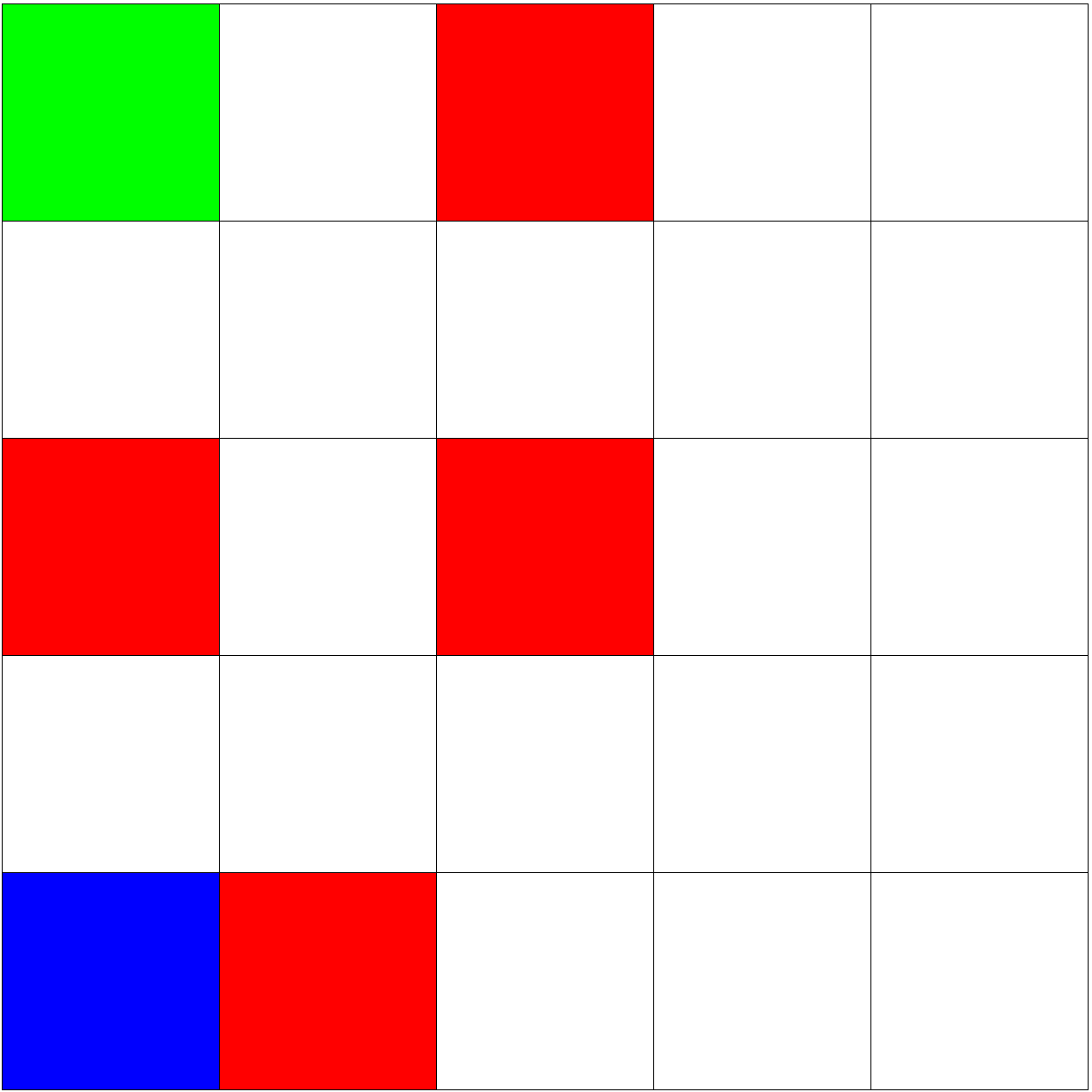_t}} & \scalebox{0.15}{\input{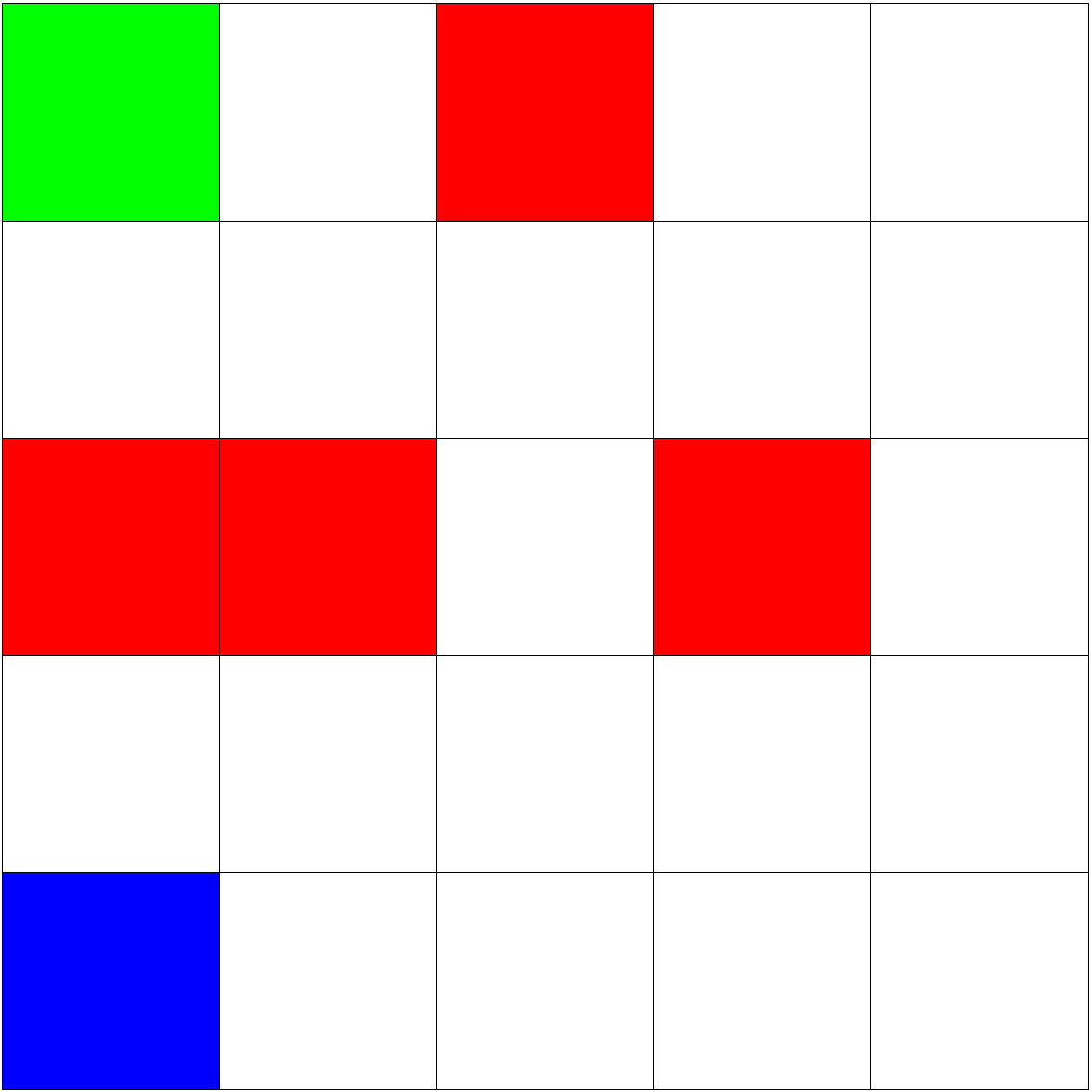_t}}\\
1 & 2 & 3
\end{tabular}
\end{center}
\caption{The environment of a mobile robot partitioned into a grid of $5\times 5$ equally sized regions. The three grids correspond to three possible placements of dangerous regions, shown in red. The locations of the starting region, in green, and the target region, in blue, are known and hence their placement is the same in all three grids.}\label{fig:grids}
\end{figure}

We implemented the algorithms from Sec.~\ref{sec:solution} and~\ref{sec:solutionbounded} in C++. In this section, we demonstrate their use on a case study motivated by examples in~\cite{pomdpexamples,pomdpcasestudy}. All executions were performed on Mac OS X 10.10.3 with 2.6 GHz Intel Core i5 processor and 8 GB 1600 MHz DDR3 memory. 

Consider a mobile robot moving in an environment partitioned into a grid of $5\times 5$ equally sized regions. The grid contains a starting, a target and possibly multiple dangerous regions, where the robot is detected and captured. The robot knows the locations of the starting and the target regions but it does not know the exact locations of dangerous regions. Nevertheless, the robot knows that the grid takes one of the three forms depicted in Fig.~\ref{fig:grids}. The robot moves deterministically in (up to) four directions corresponding to the movement in the four compass directions. To learn the presence of dangerous regions in it's immediate surroundings, the robot can deploy one of the two sensors in Fig.~\ref{fig:sensors}. The first sensor partitions the neighboring area into quadrants and reports the set of all quadrants that contain at least one dangerous region. The second sensor reports the exact regions in robot's immediate surroundings that are dangerous. In every step, the robot can decide which sensor to activate, if any. The costs of deployment of the two sensors is 1 and 2, respectively. The cost can be interpreted as the amount of resources needed for the use of each sensor. Alternatively, it may model the amount of information received by the enemy in dangerous regions. The goal of the robot is to reach the target region from the starting region without being detected, while minimizing the cost. 

\begin{figure}[t]
\begin{center}
\begin{tabular}{c c c}
\scalebox{0.23}{\input{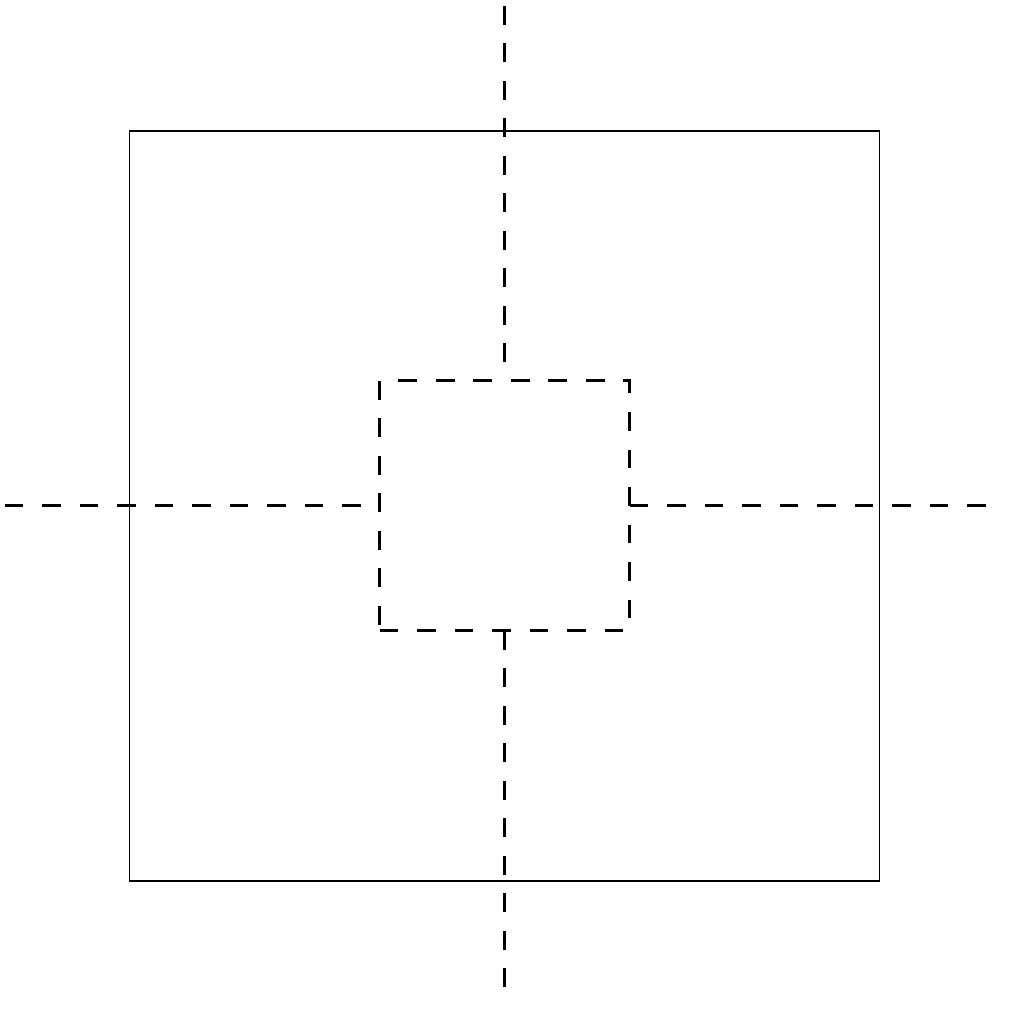_t}}  & \scalebox{0.23}{\input{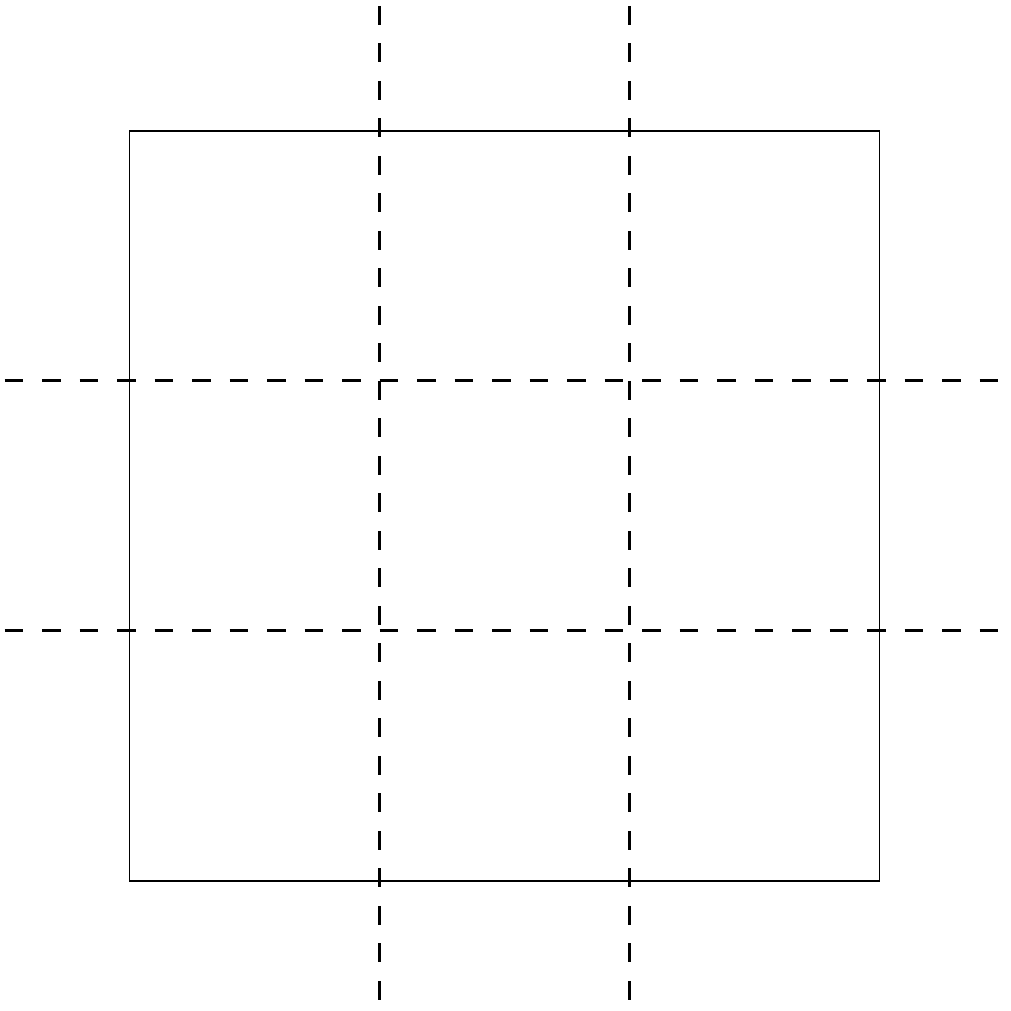_t}} & \scalebox{0.23}{\input{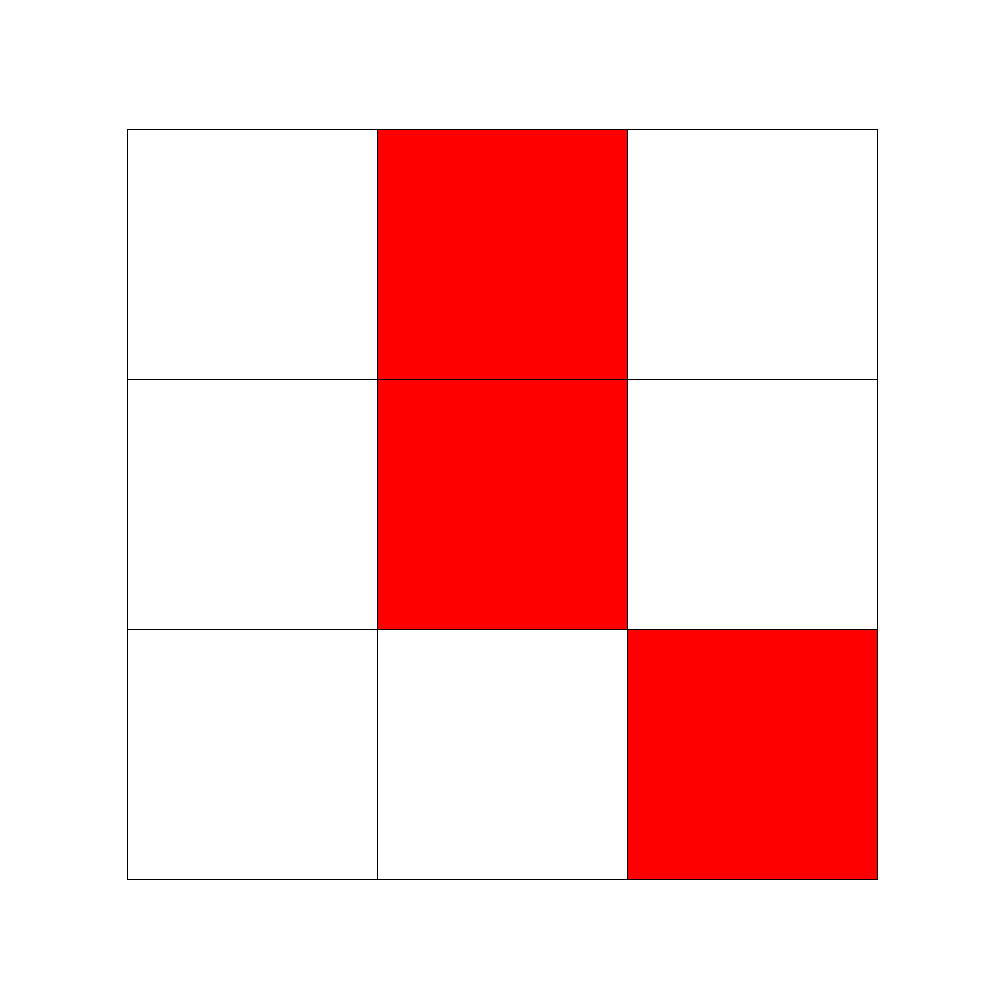_t}}\\
(a) & (b) & (c)
\end{tabular}
\end{center}
\caption{Two sensors that provide information about the presence of dangerous regions in robot's immediate surroundings. We also show the names of the corresponding observations learned by the robot. For the first sensor in (a), the surrounding area is divided into quadrants and the sensor reports all quadrants containing a dangerous region. For the second sensor in (b), the exact set of dangerous regions is reported. For example, let (c) show the immediate surroundings of the robot with it's current position in the middle and dangerous regions in red. The first sensor reports the set of observations $\{\northwest,\northeast,\southeast,\det\}$ and the second sensor reports $\{\north,\southeast,\det\}$.}\label{fig:sensors}
\end{figure}

\eva{The NTS with observation modes} that models the above system has 76 states $S=\{s_\init,s_{ijk}\mid 1\leq i\leq 3, 1\leq j,k\leq 5\}$ and 5 actions $A=\{a,\northA,\southA,\eastA,\westA\}$. States $s_{ijk}$ correspond to the regions in the three grids, where $1\leq i\leq 3$ is the grid identifier and $1\leq j,k\leq 5$ determine the row and column coordinate, respectively. For example, $s_{111}$ is the top left corner of the first grid. The initial state $s_\init$ has only one transition $T(s_\init,a)=\{s_{111},s_{211},s_{311}\}$ that corresponds to the enemy choosing one of the three grids in Fig.~\ref{fig:grids}. The transitions of all $s_{ijk}$ are deterministic and correspond to moving in compass directions $\northA,\southA,\eastA,\westA$. The set $\ap=\{\dang,\target\}$ and the labeling function is such that $L(s_\init)=\emptyset$ and $L(s_{ijk})$ indicates the target and dangerous regions as in Fig.~\ref{fig:grids}. The set of observations is $\obs = \{\north,\south,\west,\east,\northwest,\northeast,\southwest,\southeast,\det\}$. The NTS has 3 observation modes corresponding to not activating any sensor, activating the first sensor and activating the second sensor, respectively. The respective observation functions $\obsf_1,\obsf_2,\obsf_3$ are defined in Fig.~\ref{fig:sensors} and $\cost_1=0,\cost_2=1,\cost_3=2$. Note that in every step of an execution of the system, we know the robot's position in the grid precisely, only the identifier of the grid is unknown.

The scLTL formula specifying the robot's mission is $(\neg \dang)\U \target$ and the corresponding minimal DFA $\dfa$ has 3 states. The product $\product$ of $\nts$ and $\dfa$ has 208 states and 667 (possibly non-deterministic) transitions, and was constructed in less than 0.1 seconds. The weighted belief product $\blf$ has 375 states and 2634 transitions, and was constructed in 1.5 seconds. 

Using Alg.~\ref{alg:1}, we computed an optimal solution to Problem~\ref{pf:task} for this system in 7 seconds. The corresponding strategy for the robot is as follows. In the starting region, use the first sensor. If the reported observations are $\southeast$ and $\southwest$, then the robot is in grid 1 from Fig.~\ref{fig:grids}. If the set of reported observations is empty, the robot is moving either in grid 2 or grid 3. In the former case, do not deploy any sensors anymore and move in directions $\eastA,\southA,\southA,\southA,\southA,\westA$ to reach the target region. In the latter case, do not use any sensor anymore and move in directions $\southA,\eastA,\eastA,\eastA,\eastA,\southA,\southA,\southA,\westA,\westA,\northA,\westA,\westA,\southA$. The worst-case cost of the strategy is 1 and the maximum number of transitions performed by the robot to reach the target region is 14, \ie in the NTS $\nts$ it is 15 including the first step for choosing the grid.


Next, we used Alg.~\ref{alg:2} to solve the bounded version of the problem for bounds $k<15$, where the mission must be satisfied faster than using the strategy above. For all choices of $k$ below, the algorithm terminated in less than 3 seconds. For $k\leq 8$, there does not exist a suitable strategy. For $k=13$ and $k=14$, the optimal strategy has the same structure as the one resulting from Alg.~\ref{alg:1} with the following exception. If the robot learns that it moves either in grid 2 or 3, the sequence of directions is $\southA,\eastA,\eastA,\eastA,\eastA,\southA,\southA,\westA,\westA,\westA,\westA,\southA$. The maximum number of steps needed to reach the target region is 13 and the worst-case cost of the strategy is 1.

Finally, for $9\leq k\leq 12$ there exists a solution and the corresponding optimal strategy for the robot is as follows. From the starting region, move in directions $\eastA$ and then $\southA$ without deploying any sensor. Then move in direction $\eastA$ and activate the second sensor. If the reported observations are $\south$ and $\southeast$ then the robot is moving in grid 1. In such a case, do not use any sensors anymore and move in directions $\westA,\southA,\southA,\southA,\westA$ to reach the target region. Similarly, if the observations are $\south$ an $\north$ then the robot is in grid 2, do not use any sensors anymore and move in directions $\westA,\southA,\southA,\westA,\southA$. Finally, if the observations are $\southwest, \southeast$ and $\north$ then the robot is in grid 3, do not use any sensors and move in directions $\southA,\southA,\southA,\westA,\westA$. While the maximum number of steps needed to reach the target region is 9, the worst-case cost of the strategy is 2.


%



\section{Conclusion and Future Work}\label{sec:conclusion}

We consider non-deterministic transition systems with multiple observation modes with fixed non-negative costs. We present correct and optimal algorithms to solve two optimal temporal control problems. The first aims to construct a control and observation mode switching strategy that guarantees satisfaction of a finite-time temporal property given as a formula of scLTL and minimizes the worst-case cost accumulated until the point of satisfaction. Second, we consider the bounded version of the problem with a bound on the time of satisfaction. Both algorithms are demonstrated on a case study motivated by robotic application.

In our future work, we aim to use the results presented in this work as a basis for solving more intriguing problems of observation scheduling under temporal constraints. The extensions include infinite-time temporal properties, infinite-time optimization objectives and probabilistic models.

\bibliographystyle{abbrv}
\bibliography{references}{}

\end{document}